\definecolor{ao(english)}{rgb}{0.0, 0.5, 0.0}
\RenewDocumentCommand{\title}{om}{%
   \IfNoValueTF{#1}
     {\gdef\shorttitle{Transition of the semiclassical resonance widths}}% 
     {\gdef\shorttitle{#1}}%
   \gdef\@title{#2}%
}
\DeclareFontFamily{U}{mathb}{\hyphenchar\font45}
\DeclareFontShape{U}{mathb}{m}{n}{ <-6> matha5 <6-7> matha6 <7-8>
mathb7 <8-9> mathb8 <9-10> mathb9 <10-12> mathb10 <12-> mathb12 }{}
\DeclareSymbolFont{mathb}{U}{mathb}{m}{n}
\DeclareMathAccent{\abxring}{0}{mathb}{"38}
\DeclareFontFamily{U}{mathb}{\hyphenchar\font45}
\DeclareFontShape{U}{mathb}{m}{n}{ <-6> matha5 <6-7> matha6 <7-8>
mathb7 <8-9> mathb8 <9-10> mathb9 <10-12> mathb10 <12-> mathb12 }{}
\DeclareSymbolFont{mathb}{U}{mathb}{m}{n}
\newcommand{\N}{\mathbb{N}}
\newcommand{\R}{\mathbb{R}}
\newcommand{\C}{\mathbb{C}}
\newcommand{\W}{{\mathcal W}}
\newcommand{\cA}{{\mathcal A}}
\newcommand{\cR}{\mathcal{R}}
\newcommand{\Ai}{\ope{Ai}}
\newcommand{\Bi}{\ope{Bi}}
\newcommand{\Ci}{\ope{Ci}}
\newcommand{\p}{\partial}
\newcommand{\dl}{\delta}
\newcommand{\til}{\widetilde}
\newcommand{\re}{{\rm Re}\hskip 1pt }
\newcommand{\im}{{\rm Im}\hskip 1pt }
\newcommand{\ord}{{\mathcal O}}
\newcommand{\ai}{{\rm Ai}\,}
\newcommand{\ope}[1]{\operatorname{#1}}
\newcommand{\mc}[1]{\mathcal{#1}}
\newcommand{\qtext}[1]{\quad\text{#1 }\ }
\newcommand{\be}{\begin{equation}}
\newcommand{\ee}{\end{equation}}
\newcommand{\ben}{\begin{equation*}}
\newcommand{\een}{\end{equation*}}
\newcommand{\inc}{{%\ope{in}}}
\flat}}
\newcommand{\out}{{%\ope{out}}}
\sharp}}
\newcommand{\dir}{{\bullet}}
\newcommand{\boundellipse}[3]% center, xdim, ydim
{(#1) ellipse (#2 and #3)
}
\theoremstyle{theorem} 
\newtheorem{theorem}{Theorem}%[section]
\newtheorem{lemma}{Lemma}[section]
\newcounter{Cond}
\newtheorem{condition}[Cond]{Assumption}
\newtheorem{proposition}[lemma]{Proposition}
\newtheorem{remark}[lemma]{Remark}
\newcounter{Ass}
\newtheorem{As}{Assumption}[Ass]
\theoremstyle{definition} 
\numberwithin{equation}{section}
  {\setlength{\baselineskip}{1.5\baselineskip}

\title{
Transition of the semiclassical resonance widths  \\
across a tangential crossing energy-level 
}
\author{Marouane Assal}
\address{Marouane Assal, Departamento de Matem\'atica y Ciencia de la Computaci\'on, Universidad de Santiago de Chile, Las sophoras 173, Santiago, Chile. e-mail: marouane.assal@usach.cl} 

\author{Setsuro Fujiie}
\address{Setsuro Fujiie, Department of Mathematical Sciences, 
Ritsumeikan University, 111 Noji-Higashi, Kusatsu, 525-8577,  Japan. 
e-mail: fujiie@fc.ritsumei.ac.jp}
\author{Kenta Higuchi}
\address{Kenta Higuchi, Graduate School of Science and Engineering, Ehime University/ Bunkyocho 3, Matsuyama, Ehime, 790-8577, Japan.
 e-mail: higuchi.kenta.en@ehime-u.ac.jp}

\begin{document}

\maketitle

\begin{abstract}
We consider a 1D $2\times 2$ matrix-valued operator \eqref{System0} with two semiclassical Schr\"odinger operators on the diagonal entries and small interactions on the off-diagonal ones. When the two potentials cross at a turning point with
contact order $n$, the corresponding two classical trajectories at the crossing level  intersect at one  point in the phase space with contact order $2n$.
We compute the transfer matrix at this point between the incoming and outgoing microlocal solutions and apply it to the semiclassical distribution of resonances at the energy crossing level. It is described in terms of a generalized Airy function.
This result generalizes \cite{FMW1} to the tangential crossing and \cite{AFH1} to the crossing at a turning point.

\vspace{0.3cm}

\noindent
\textbf{Keywords:} Resonances; Matrix Schr\"odinger operators; Energy-level crossing.
\end{abstract}

%\tableofcontents

\section{Introduction}
We study the asymptotic distribution of resonances for the matrix Schr\"odinger operator
\begin{equation}\label{System0} 
P := 
\begin{pmatrix}
P_1 & h U\\
h U^* & P_2
\end{pmatrix},
\end{equation} 
with a small positive parameter $h$. 
%on $L^2(\mathbb R; \mathbb C^2)$, 
The diagonal part consists of scalar Schr\"odinger operators 
\ben
P_j:=(h D_x)^2  + V_j(x) \quad (j=1,2), \;\;\;\; D_x:= -i \frac d{dx},
\een
with real-valued potentials  $V_1$ and $V_2$ on $\mathbb R$, and the anti-diagonal part describing the interaction between $P_1$ and $P_2$ consists of ($h$ times) a first order semiclassical differential operator $U$ with real-valued coefficients on $\mathbb R$
\ben
U=U(x,hD_x) := r_0(x) + i  r_1(x) h D_x,
\een
and its adjoint $U^*$. Such operators arise as important models in molecular physics and quantum chemistry for instance in the Born-Oppenheimer approximation of molecular dynamics (see e.g., \cite{KMSW}). Our goal is to provide a precise asymptotics on the imaginary parts (widths) of the resonances which correspond to the inverse of the lifetime of the molecule.

Let $p_j(x,\xi):= \xi^2+ V_j(x)$ be the classical Hamiltonian associated with $P_j$ $(j=1,2)$.
We consider the situation where, at an energy-level $E_0$ (we set $E_0=0$), the classical trajectory on $\Gamma_1:=p_1^{-1}(0)$ is a simple closed curve, while that on $\Gamma_2:=p_2^{-1}(0)$ is non-trapping. In this situation, the scalar operator $P_1$ has eigenvalues near $E=0$ subject to the Bohr-Sommerfeld quantization rule \eqref{BSR}, but because of the
interaction with $P_2$, these eigenvalues may be shifted to resonances  of the matrix operator $P$ in the lower half plane (Fermi's golden rule). 

The semiclassical asymptotics of the imaginary parts of the resonances in this setting has first been studied  in \cite{Ma,Na,Ba,As} in the case where $\Gamma_1$ and $\Gamma_2$ are disjoint. They proved the exponential decay estimate  of the resonance widths.
Recently,  \cite{FMW1,FMW2,FMW3, AFH1} studied the case where $\Gamma_1$ and $\Gamma_2$ cross each other and proved the polynomial asymptotics of the resonance widths. 
More precisely, \cite{FMW1, FMW2}  considered the resonances near a crossing level using a construction of exact solutions to the system $Pu=Eu$.
On the other hand, \cite{FMW3, AFH1} studied the resonances above the crossing level
using a microlocal argument. 
The microlocal argument reduces the study of resonances to a microlocal transfer matrix at the crossing point, which relates the microlocal data on the two incoming trajectories to the two outgoing ones. 
The recent work \cite{AFH1} generalized the microlocal transfer matrix  at a transversal crossing point of  \cite{FMW3} to a tangential one, by making use of the exact solutions constructed in \cite{FMW1, FMW2}.

In the present article, we  study the resonances at the  crossing level, which was excluded in \cite{AFH1}, and generalize \cite{FMW1,FMW2} to the tangential crossing. 
A remarkable feature at the crossing level is the transition of the intersection between $\Gamma_1$ and $\Gamma_2$.
 Below this level, they have no intersection, which results in the exponentially small width of resonances. Above the crossing level, on the contrary, they intersect transversally at two points, which implies the width of polynomial order as proved in \cite{AFH1}.  
Our result Theorem \ref{MAINTH} asserts that the resonance widths are of order $h^{1+\frac 2{2n+1}}$, where $n$ is the contact order of  $V_1$ and $V_2$, and the coefficient of the leading term is explicitly given in terms of a generalization of the
 Airy function. Notice that the contact order of $\Gamma_1$ and $\Gamma_2$ is $m=2n$ and the order $h^{1+\frac 2{2n+1}}=h^{1+\frac 2{m+1}}$ of the resonance widths coincides with that `above' an order $m$ crossing obtained in \cite{AFH1}. The generalized Airy function $A_n(\lambda)$, which decays exponentially as $\lambda\to +\infty$ and oscillates as $\lambda\to -\infty$, describes the transition.
 
The proof of Theorem \ref{MAINTH} is reduced to the asymptotic  formula of the microlocal transfer matrix at the crossing point (Theorem \ref{prop:crossing-turning}). In particular, the exponent of the order in $h$ of the resonance width is one plus twice the exponent of the subprincipal order of the transfer matrix, which is $1/(2n+1)$ in our case.
This reduction argument is exactly the same as in \cite{AFH1}, and is omitted in this article.

Theorem \ref{prop:crossing-turning} is the linear relation between the incoming microlocal data and the outgoing one near a crossing point. Because of the tangential crossing, we cannot reduce it to a normal form. Instead, we construct local exact solutions by a convergent successive approximation. More precisely, let the crossing point be $(x,\xi)=(0,0)$. We construct a basis of exact solutions $(w_1^\flat, w_2^\flat, w_1^\sharp, w_2^\sharp)$ near $x=0$ and a pair of exact solutions $(w_1^-, w_2^-)$ to the equation $(P-E)w=0$, which is a basis of microlocal solutions near $(0,0)$ (this means that $w_1^-$ and $w_2^-$ are bounded as $h\to +0$ both on the left and right of $x=0$). We express $(w_1^-, w_2^-)$ in linear combination \eqref{coeff} of the basis $(w_1^\flat, w_2^\flat, w_1^\sharp, w_2^\sharp)$. Then we can deduce the transfer matrix from this $4\times 2$ matrix since we know in fact the microlocal behavior of $(w_1^\flat, w_2^\flat, w_1^\sharp, w_2^\sharp)$ on the incoming and outgoing trajectories.
Thus we obtain two term asymptotics of the transfer matrix. In particular, the subprincipal term, which will give the resonance width, is reduced to an integral \eqref{IEh} of the product of two WKB solutions associated with $P_1$ and $P_2$. This oscillatory integral has its degenerate critical point of order %\br 
$n$ at the crossing point, i.e. the derivative of the phase vanishes  at order $n$. 
Contrary to \cite{AFH1}, this critical point is close to the turning points.
The two turning points are of order $E$, and their difference is of order $E^n$. Hence if $E$ is sufficiently small ($\re E=o(h^{4/(2n+1)})$, see Remark~\ref{rem1} and Section~\ref{sec:small}) and $n\ge 2$, roughly speaking, the phase  behaves like 
$x^{n+\frac 12}$ and the amplitude  like $x^{-\frac 12}$ near the crossing-turning point $x=0$. Then after the change of variable $x=z^2$,
the phase becomes like $z^{2n+1}$ and the singularity of the amplitude dissappears. The degenerate stationary phase in the variable $z$ gives the asymptotics of order $h^{1/(2n+1)}$ of the integral. For the maximal energy range $\re E=\ord(h^{2/(2n+1)})$, the asymptotic behavior of this integral is given by a generalization of the Airy function. This describes the transition of the order of the integral from $h^{1/(n+1)}$ to $h^{1/(2n+1)}$ due to the confluence of crossing points.

This article is organized as follows. The main results (Theorem \ref{MAINTH} and Theorem \ref{prop:crossing-turning}) are stated in Section \ref{Section 2}. Section \ref{app:Connection} is devoted to the proof of Theorem \ref{prop:crossing-turning}. In section \ref{sec:small}, we restrict the real part of the resonances in a smaller interval to obtain a constant coefficient of the leading term and a better error estimate for the resonance asymptotics.
In Appendix \ref{airyapp}, we note some fundamental properties of Airy functions and the generalized Airy function. Finally in Appendix \ref{Appendix B}, we
review various solutions to the single Schr\"odinger equation near a simple turning point that we use in the construction of the exact solutions to our system.

%\subsection{Resonance asymptotics for a model} 

\section{Assumptions and results} \label{Section 2}
We define the resonances of the operator $P$  under the following condition.
\begin{As}\label{H2} 
The functions $V_1,V_2,r_0$ and $r_1$ are real-valued analytic functions on $\mathbb R$  and extend to bounded holomorphic functions in the complex domain
\ben
\Sigma:=\{x\in\C;\,\left|\im x\right|< \delta_0 \langle \re x\rangle\}
\een
for some constant $\delta_0>0$, where $\langle t \rangle:= (1+t^2)^{1/2}$. Moreover, for each $j=1,2$, $V_j$ admits non-zero limits $V_j^\pm\neq0$ as $\re x\to \pm\infty$ in $\Sigma$. 
\end{As}

%We assume $V_j^\pm\neq0$ for $j=1,2$. 
For $\delta_1,\delta_2>0$ possibly $h$-dependent, one introduces the complex box 
\be\label{DEFR}
\cR=\cR(\delta_1,\delta_2) :=[-\delta_1,\delta_1] +i[-\delta_2,\delta_2].
\ee
Under Assumption~\ref{H2}, the operator $P$ is self-adjoint in $L^2(\mathbb R;\mathbb C^2)$ with domain $H^2(\mathbb R;\mathbb C^2)$. One can define the resonances of $P$ in $\cR$, for small $\dl_1$ and $\dl_2$,  
as the values $E\in\C$ such that the equation $Pw=Ew$ has a non-trivial outgoing solution $w$, that is, a non-identically vanishing solution such that for some small $\theta>0$, the function $x\mapsto w(x e^{i\theta})$ is in $L^2(\R;\C^2)$ (see e.g., \cite{AgCo}). Equivalently, the resonances can be defined as the eigenvalues of the operator $P$ acting on $L^2(\mathbb R_\theta;\mathbb C^2)$, where $\mathbb R_\theta$ is a complex distortion of $\mathbb R$ that coincides with $e^{i\theta} \mathbb R$ for $x$ large enough (see e.g., \cite{HeMa}).
In particular, the imaginary part of each resonance is negative.  We denote $\ope{Res}(P)$ the set of resonances of $P$.

We study resonances near a crossing level $E_0=0$ of a simple model (see Figure \ref{Fig}).
\begin{As}\label{H1}
There exists $b_0<0$ such that 
\be\label{eq:modelsV}
\frac{V_1(x)}{x(x-b_0)}>0,\quad
\frac{V_2(x)}{x}>0\qtext{for}x\in\R,
\qtext{and}
V_2(x)<V_1(x)
\qtext{for}x<0.
\ee
%for all $$. They satisfy $V_2(x)<V_1(x)$ for $x<0$. 
We also assume that the vanishing order at $x=0$ of $V_1-V_2$ is finite.
\end{As}
\begin{figure}
\centering\includegraphics[bb=0 0 470 590, width=8cm]{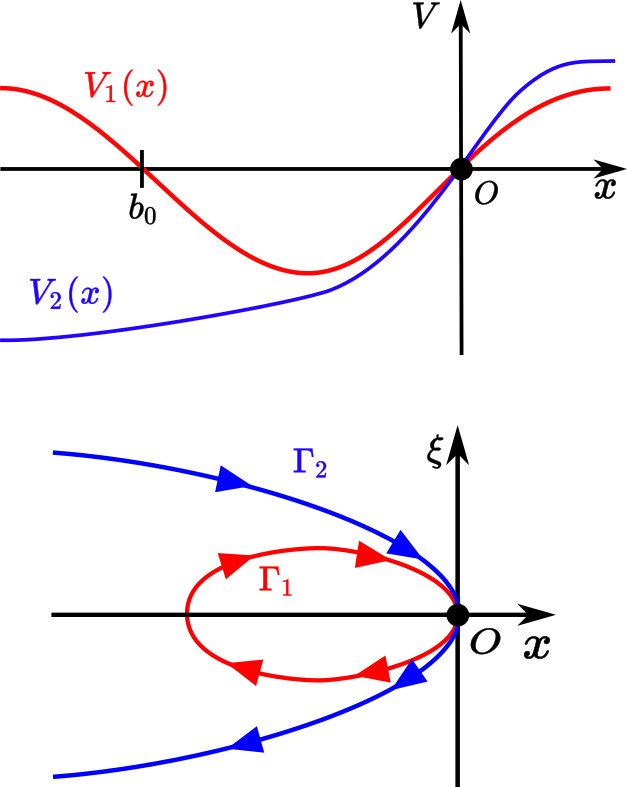}
\caption{The potentials and the classical trajectories}\label{Fig}
\end{figure}

 In the following, we denote by $n\in \mathbb N\setminus \{0\}$ the vanishing order of $V_1-V_2$ at $x=0$:
\be\label{contact-o}
V_1^{(k)}(0)-V_2^{(k)}(0)=0\quad(0\le k\le n-1),\qquad
V_1^{(n)}(0)-V_2^{(n)}(0)\neq0.
\ee

 Let $H_{p_j}:=2\xi\p_x-V_j'(x)\p_\xi$ be the Hamiltonian vector-field associated with $p_j$, $j=1,2$. For an energy $E\in \mathbb R$, denote by $\Gamma_j(E)$ the characteristic set associated with $P_j$ given by 
\ben
\Gamma_j(E):=p_j^{-1}(E)=\{(x,\xi)\in T^*\R;\,p_j(x,\xi)=E\} \quad (j=1,2),
\een
and set 
\ben
\Gamma(E):= \Gamma_1(E) \cup \Gamma_2(E).
\een
At $E=0$, we simply write $\Gamma_j:=\Gamma_j(0)$, $j=1,2$.  Assumption \ref{H1} means that $V_1$ admits a simple well at the energy-level 0 and hence  $\Gamma_1$ is a simple closed curve symmetric with respect to the $x$-axis, while 0 is a non-trapping energy for $p_2$.
For $E$ near $0$,  we define the action integral  
\ben\label{AAC}
\cA(E):=  \int_{\Gamma_1(E)} \xi dx.
\een
The function $\mathcal{A}(E)$ extends to an analytic function in a complex neighborhood of $E=0$.  Under Assumption \ref{H1}, it is well known that the spectrum of the scalar Schr\"odinger operator $P_1$ in a small neighborhood of $0$ consists of eigenvalues approximated by the roots of the Bohr-Sommerfeld quantization rule (see e.g., \cite{ILR, Ya}) 
\begin{equation}\label{BSR}
\cos\left( \frac{\mathcal{A}(E)}{2h}\right) = 0.
\end{equation}
For $\delta_1,h>0$ small enough, we define an $h$-dependent discrete set $\mathfrak{B}_h\subset [-\delta_1, \delta_1]$ by 
\begin{equation}\label{DEFFF}
\mathfrak{B}_h=\mathfrak{B}_{h}(\delta_1) := \left\{E\in  [-\delta_1, \delta_1];\, \cos\left( \frac{\mathcal{A}(E)}{2h}\right) = 0 \right\}.
\end{equation}
The distance of two successive points in $\mathfrak{B}_h$ is of order $h$, more precisely, %if $\delta_1= \mathcal{O}(h)$, then
\ben
\ope{dist}(\mathfrak{B}_h\setminus\{E\},E)=\frac {2\pi h}{\cA'(E)}+\ord(h^2)\quad \text{for any }E\in \mathfrak{B}_h.
\een
%Let $n\in\N\setminus\{0\}$ be the vanishing order of $V_1-V_2$ at $x=0$:
%\be\label{contact-o}
%V_1^{(k)}(0)-V_2^{(k)}(0)=0\quad(0\le k\le n-1),\qquad
%V_1^{(n)}(0)-V_2^{(n)}(0)\neq0.
%\ee
In the phase space, Assumption \ref{H1} implies that $\Gamma_1$ and $\Gamma_2$ cross tangentially at $(0,0)\in T^*\R$ with contact order $2n$, more precisely, we have for any $n\geq 2$
$$
H_{p_1}^kp_2(0,0)=0\quad(0\le k\le 2n-1),
$$
$$
H_{p_1}^{2n}p_2(0,0) = (-1)^n  \frac{(2n)!}{n!} (V_1'(0))^n (V_2^{(n)}(0) -V_1^{(n)}(0)) \neq 0.
$$
%In the following, we use the same letter $U$ for the principal symbol of the operator $U$, that is, $U(x,\xi):= r_0(x)+ir_1(x) \xi$, $(x,\xi)\in T^* \mathbb R$. 

The following result, which is our main result, describes the asymptotic distribution of resonances in 
$\mathcal R(Lh^{\frac2{2n+1}}, Lh)$ for an arbitrarily fixed $L>0$. Their imaginary part is of polynomial order in $h$ with power $\frac{2n+3}{2n+1}=1+\frac{2}{2n+1}$. The coefficient of this leading term is expressed in terms of a generalization $A_n$ of the Airy function (see Appendix \ref{airyapp})
\be\label{defAn}
A_n(y)
:=\frac1{2\pi}\int_{\R}\exp\left(i\int_0^\eta(y+\tau^2)^n d\tau\right)d\eta.
\ee
\begin{theorem}\label{MAINTH}
Suppose Assumptions \ref{H2} and \ref{H1}, and let $\mathfrak{B}_h$ and $\mathcal R$ be given by \eqref{DEFFF} and \eqref{DEFR} with $\delta_1=Lh^{\frac2{2n+1}}$, $\delta_2=Lh$ for an arbitrarily fixed $L>0$. Then, %there exist $\mathfrak{J}\subset ]0,1]$ with $0\in \overline{\mathfrak{J}}$ 
for $h>0$ small there exists a bijective map 
\ben
z_h : \mathfrak{B}_h \to {\rm Res} (P)\cap {\mathcal R}
\een 
such that for any $E\in \mathfrak{B}_h$ one has 
\begin{equation}\label{Realpart}
\vert z_h(E) - E \vert =  \mathcal{O}(h^{\frac{2n+3}{2n+1}}),
\end{equation}
\begin{equation}\label{Impart}
{\rm Im}\, z_h(E) = -%\mathcal{D}(E)
\frac{\kappa_n(\lambda)^2}{\cA'(E)}h^{\frac{2n+3}{2n+1}}+\ord(h^{\frac{2n+4-\epsilon}{2n+1}}),
\end{equation}
uniformly as $h\to0^+$, with $\epsilon=0$ when $n=1$, $\frac12$ when $n=2$, and $\frac3{4n+3}$ when $n\ge 3$. The coefficient $\kappa_n$ is a function of $\lambda:=E/h^{\frac 2{2n+1}}$ given by 
\be
\label{kpn}
\kappa_n(\lambda)=\frac{2\pi r_0(0)}{(V_1'(0) V_2'(0))^{1/4}}q_{n}^{-\frac 1{2n+1}}{A}_{n}\left (-\frac{q_{n}^{\frac{2}{2n+1}}}{(V_1'(0) V_2'(0))^{1/2}}\lambda\right ), 
\ee
where $A_n$ is defined by \eqref{defAn},
and $q_n$  is the constant given by 
\be\label{qn}
q_n=\frac{V_1^{(n)}(0)-V_2^{(n)}(0)}{n!(V_1'(0) V_2'(0))^{1/4}}. 
\ee
\end{theorem}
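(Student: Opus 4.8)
The plan is to run the microlocal monodromy argument of \cite{AFH1}, with the crossing--turning transfer matrix of Theorem \ref{prop:crossing-turning} replacing the transversal-crossing transfer matrix used there; as noted in the introduction this reduction is, step by step, the one of \cite{AFH1}, so I only sketch its architecture and indicate where the new geometry enters. First, after the exterior complex distortion $x\mapsto xe^{i\theta}$, resonances in $\cR$ become eigenvalues of the distorted operator, and the usual semiclassical elliptic and propagation estimates show that a corresponding eigenfunction is microlocally trivial off $\Gamma=\Gamma_1\cup\Gamma_2$ and microlocally trivial on the part of $\Gamma_2$ incoming from $x=-\infty$. Hence $E\in\cR$ is a resonance exactly when $(P-E)w=0$ has a non-trivial solution that near $\Gamma$ is a combination of the bounded WKB branches carried by $\Gamma_1$ and $\Gamma_2$ and that is purely outgoing at $x=\pm\infty$.

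Next I would assemble this global microlocal solution. Away from the turning points $(b_0,0),(0,0)$ of $\Gamma_1$ and away from $(0,0)$ on $\Gamma_2$ it is a scalar WKB function transported by the Hamilton flow of the relevant $p_j$; at the simple, non-crossing turning point $(b_0,0)$ the two incident branches are glued by the Airy connection recalled in Appendix \ref{Appendix B}, contributing the Maslov factor $e^{-i\pi/2}$ up to $\ord(h)$ and no loss of amplitude. The outgoing condition at $x=-\infty$ on $\Gamma_2$ fixes that branch up to a scalar and forces the incoming datum on the $\Gamma_2$-leg at $(0,0)$ to vanish. At the crossing--turning point $(0,0)$ the data on the two incoming and the two outgoing legs are linked by the transfer matrix $T(E,h)$ of Theorem \ref{prop:crossing-turning}: by that theorem its principal part is block diagonal (the pure simple-turning-point connections on $\Gamma_1$ and on $\Gamma_2$), its first correction is anti-diagonal of order $h^{1/(2n+1)}$ with leading coefficient the oscillatory integral \eqref{IEh} --- which, after the substitution $x=z^2$ turning the phase into one $\sim z^{2n+1}$ with regular amplitude, and with the generalised Airy function of Appendix \ref{airyapp} entering for $|\re E|\lesssim h^{2/(2n+1)}$, amounts (together with the ambient WKB prefactors) to $\kappa_n(\lambda)$ as defined by \eqref{kpn}--\eqref{qn} --- and its remainder is $\ord(h^{2/(2n+1)})$, the resulting diagonal loss being, by flux (Wronskian) conservation, real, negative and proportional to $\kappa_n(\lambda)^2h^{2/(2n+1)}$.

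Composing the transports around $\Gamma_1$ with the $(b_0,0)$-connection and with the $(1,1)$-entry of $T(E,h)$ (the leaked amplitude on the outgoing $\Gamma_2$-leg simply escaping to $-\infty$), the requirement that the loop close up reduces to a scalar holomorphic equation $\cos\big(\cA(E)/2h\big)=\mathfrak r(E,h)$, where $\mathfrak r(E,h)=\ord(h^{2/(2n+1)})$ has leading term a purely imaginary multiple of $\kappa_n(\lambda)^2h^{2/(2n+1)}$, $\lambda=Eh^{-2/(2n+1)}$, of the sign forced by $\im E<0$. Since $\delta_1=Lh^{2/(2n+1)}\gg h$ and $\delta_2=Lh$, on a suitable $\partial\cR$ avoiding $\mathfrak B_h$ one has $|\cos(\cA(E)/2h)|\gtrsim 1\gg|\mathfrak r|$, while inside $\cR$ the map $E\mapsto\cos(\cA(E)/2h)$ has precisely the simple zeros $\mathfrak B_h\cap\cR$, with derivative $\asymp h^{-1}$ there and mutual spacing $\asymp h\gg h^{(2n+3)/(2n+1)}$. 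Rouch\'e's theorem then yields a bijection $z_h\colon\mathfrak B_h\to\operatorname{Res}(P)\cap\cR$, and the implicit function theorem near each $E\in\mathfrak B_h$ gives $z_h(E)-E=\tfrac{2h}{\cA'(E)}\mathfrak r(E,h)\big(1+o(1)\big)$; expanding $\mathfrak r$ and controlling the errors --- the dominant one being the relative $\ord(h^{(1-\epsilon)/(2n+1)})$ imprecision of the subprincipal coefficient of $T$ inherited from the estimate of \eqref{IEh}, the tighter value of $\epsilon$ (and the passage to a genuinely constant leading coefficient on a shorter $\re E$-interval) coming from Section \ref{sec:small} --- produces \eqref{Realpart} and \eqref{Impart}, uniformly for $\lambda\in[-L,L]$ because $\kappa_n$ is smooth there and $\im\lambda=\ord(h^{(2n-1)/(2n+1)})\to0$.

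I expect the genuinely hard step to be upstream of this reduction, namely Theorem \ref{prop:crossing-turning} itself: the evaluation of the oscillatory integral \eqref{IEh} when its order-$n$ degenerate critical point merges with the two $\ord(E)$ turning points, a confluence with no counterpart in \cite{AFH1} (where the crossing sat away from the turning points) and which is precisely what produces the generalised Airy function $A_n$ and the subprincipal exponent $1/(2n+1)$. Within the reduction, the only point requiring care is the order bookkeeping at $(0,0)$: one must check that the $h^{1/(2n+1)}$ correction of $T(E,h)$ is genuinely anti-diagonal, so that it affects the $\Gamma_1$-monodromy only at order $h^{2/(2n+1)}$ --- otherwise the real shift in \eqref{Realpart} would be of the larger order $h^{(2n+2)/(2n+1)}$ --- and that the Wronskian identity fixing the diagonal loss stays uniform as $(0,0)$ plays simultaneously the role of a crossing point and of two nearly colliding simple turning points.
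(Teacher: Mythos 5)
Your proposal takes essentially the same route as the paper: Theorem \ref{MAINTH} is deduced from the transfer-matrix asymptotics of Theorem \ref{prop:crossing-turning} via the monodromy/quantization reduction of \cite{AFH1} (Rouch\'e plus an implicit-function argument around each Bohr--Sommerfeld point), which is exactly the reduction the paper itself invokes and omits. Your sketch of that reduction --- the outgoing condition killing the incoming $\Gamma_2$ datum, the anti-diagonal $h^{1/(2n+1)}$ coupling at $(0,0)$ producing, via flux conservation, the $\kappa_n(\lambda)^2h^{2/(2n+1)}$ perturbation of the Bohr--Sommerfeld condition --- matches the intended argument, with the genuinely new content residing, as you correctly note, in Theorem \ref{prop:crossing-turning} itself, proved in Section \ref{app:Connection}.
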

\noindent
Note that $V_1'(0)V_2'(0)>0$ from Assumption~\ref{H1}.

\begin{remark}
Strictly speaking,  in Theorem \ref{MAINTH},  the semiclassical limit $h\to 0^+$ has to be taken in a subset $\frak{I}\subset]0,1]$ with $0\in\overline{\frak{I}}$ to ensure the bijectivity of the map $z_h$. 
When a Bohr-Sommerfeld point $E\in \mathfrak{B}_h$ is very close to one of the endpoints $\pm \dl_1=\pm Lh^{2/(2n+1)}$ of the interval $[-\dl_1,\dl_1]$, the bijectivity may break down. This is a usual problem (see e.g., \cite{HeSj1,HeSj2}) in the semiclassical distribution of resonances/eigenvalues in some fixed domain.
We simply write $h\to 0^+$ in order not to complicate the statement (as in \cite{AFH1}).

%under the situation of Theorem~\ref{MAINTH},  the map $z_h$ is not always a bijection between $\frak{B}_h$ and $\ope{Res}(P)\cap\mc{R}$ but an injection from $\frak{B}_h$ to $\ope{Res}(P)$ satisfying \eqref{Realpart} and \eqref{Impart}. The bijectivity may break down when there is a Bohr-Sommerfeld point very close to one of the endpoints $\pm \dl_1=\pm Lh^{2/(2n+1)}$. 
%Instead, for any $h$-independent $\e>0$, 
%\ben
%\left(\ope{Res}(P)\cap\mc{R}\right)\cap\left\{\left|\re z\right|\le (L-\e)h^{2/(2n+1)}\right\}
%\subset z_h(\frak{B}_h),
%\een
%holds for $h>0$ small enough.
%, there exists $E\in\frak{B}_h$ such that $z_h(E)=z$ when $h>0$ is sufficiently small. 
%In particular, there exists a subset $\frak{I}\subset]0,1]$ with $0\in\overline{\frak{I}}$ such that $z_h$ is a bijection for every $h\in\frak{I}$.
\end{remark}

\begin{remark}
The generalized Airy function $A_n$ appearing in the coefficient of the leading term explains the transition from positive energies with intersection  of $\Gamma_1$ and $\Gamma_2$ to negative ones without intersection. In fact, when $L$ is large, the variable  $y=-\lambda q_{n}^{2/(2n+1)}/\sqrt{V_1'(0)V_2'(0)}$  in this function (see \eqref{kpn}) is a positive large or negative large number near the negative or positive extremity of the interval $[-\delta_1, \delta_1]$ respectively. As we see in Appendix {\ref{airyapp}}, $A_n(y)$ decays exponentially as $y\to +\infty$, whereas it is oscillating as $y\to -\infty$. 
\end{remark}

\begin{remark}
Our method applies also to the case  $V_1'(0)V_2'(0)<0$ with $n=1$, which was studied in \cite{FMW1} without microlocal argument. 
To include this case, it is enough to put absolute value for $V_1'(0)V_2'(0)$ in \eqref{kpn} and \eqref{qn}. This coincides with the result in \cite[$(2.5)$]{FMW1}  by modifying the misprint $\mu_1(\lambda_k(h))^2+\mu_2(\lambda_k(h))^2$ with $ |V_1'(0)V_2'(0)|^{-\frac13}\left[\mu_1(\lambda_k(h))+\mu_2(\lambda_k(h))\right]^2$.
\end{remark}

%\subsection{Microlocal connection formulae}\label{sub:connection}

As in  \cite{AFH1}, the proof of Theorem~\ref{MAINTH} is essentially reduced to the following microlocal connection formula at the crossing point $(0,0)$.
Here we only assume the following local condition:
%We consider a general setting where only the following condition is assumed:
%==========$Figure 2$
\begin{figure}
\centering
\includegraphics[bb=0 0 190 223, width=5.5cm]{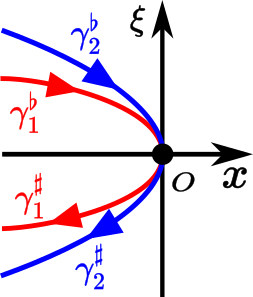}
\caption{Incoming and outgoing trajectories near $(0,0)$}
\label{Fig:2}
\end{figure}
%=========$Fin de la Figure 2$
%\setcounter{Ass}{2}
%\setcounter{As}{0}
%\begin{As}\label{H3}
\begin{condition}\label{H3}
The functions $V_1$, $V_2$, $r_0$ and $r_1$ are real-valued analytic functions in a small interval around the origin $x=0$ and they extend to holomorphic functions in a small complex neighborhood of the origin. Moreover, $V_1'(0)>0$, $V_2'(0)>0$ and there exists $n\in\N\setminus\{0\}$ such that \eqref{contact-o} holds.
\end{condition}

%\begin{condition}\label{H3}
%The functions $V_1$, $V_2$, $r_0$ and $r_1$ are smooth and real-valued near the origin $x=0$, $V_1'(0)>0$, $V_2'(0)>0$ and there exists $n\in\N\setminus\{0\}$ such that \eqref{contact-o} holds.
%\ben
%V_1^{(k)}(0)-V_2^{(k)}(0)=0\quad(0\le k<n),\qquad
%V_1^{(n)}(0)-V_2^{(n)}(0)\textcolor{red}{>}0.
%\een 
%\end{condition}
%\end{As}

In the following we use the same microlocal and semiclassical notations and terminologies as in \cite{AFH1}. In particular, we recall that if $f=f(x,h)\in L^2(\mathbb R)$ with $\Vert f\Vert_{L^2} \leq 1$, the notation $f\equiv 0$ near a point/subset of $T^*\mathbb R$ means that $f$ is microlocally $0$ near that point/subset.

For each $j=1,2$, let $\gamma_j^\flat,\gamma_j^\sharp$ be respectively the incoming and outgoing trajectories on $\Gamma_j$ in a neighborhood of $(x,\xi)=(0,0)$ (see Figure~\ref{Fig:2}). Remark that, for $j=1,2$, one has
$$
\gamma_j^\flat\subset \{(x,\xi);x<0, \xi>0\},\quad
\gamma_j^\sharp\subset \{(x,\xi);x<0, \xi<0\}.
$$
Following \cite[Proposition 5.4]{FMW3}, with each curve  $\gamma_j^\dir$, 
we associate a microlocal WKB solution $f_j^\dir$, where $\bullet$ stands for $\flat$ or $\sharp$:
\begin{equation}
(P-E)f_j^\dir\equiv 0\quad
\text{near }\gamma_j^\dir,
\end{equation}
which is
of the form
\begin{equation}
f_j^\dir= e^{i\phi_j^\dir (x)/h}\begin{pmatrix}
\sigma_{j,1}^\dir(x,h)\\
\sigma_{j,2}^\dir(x,h)
\end{pmatrix},\quad
\sigma_{j,k}^\dir(x,h)\sim\sum_{l\ge0}h^l\sigma_{j,k,l}^\dir(x)\quad k=1,2.
\end{equation}
For $x<0$ small, the phase function $\phi_j^\dir$ is given by
$$\phi_j^\flat(x):=\int_{a_j(E)}^x\sqrt{E-V_j(y)}\,dy,\quad \phi_j^\sharp(x):=-\int_{ a_j(E)}^x\sqrt{E-V_j(y)}\,dy$$ 
where $a_j(E)$ is the unique solution of $V_j(x)=E$ near $x=0$.  The symbols $\sigma_{j,k}^\dir$ satisfies
\be
\sigma_{j,j,0}^\dir(x)=(E-V_j(x))^{-1/4}, \quad
\sigma_{j,\hat j,0}^\dir(x)=0 \quad  (\hat j=3-j). 
%\sigma_{j,\hat j,1}^\dir(x)=\frac{r_0(x)\pm (-1)^jir_1(x)\sqrt{E-V_{j}(x)}}{(V_{\hat j}(x)-V_j(x))(E-V_j(x))^{1/4}},
\ee
The space of microlocal solutions on a classical trajectory is one-dimensional except at the crossing point, and hence $f_j^\dir$ is a basis on $\gamma_j^\dir$.

%\subsection{Microlocal connection formulae}

%We assume that the contact order $m$ at $(0,0)$ of $\Gamma_1(0)$ and $\Gamma_2(0)$ is finite.
% We have the following property:

%\begin{proposition}\label{prop:T-exists}
%\end{proposition}

\begin{theorem}\label{prop:crossing-turning}
Suppose Assumption~\ref{H3}, and let $E$ belong to $\cR=\cR(Lh^{\frac{2}{2n+1}},Lh)$ for an arbitrary $h$-independent $L>0$. There exists a $2\times 2$ matrix $T=T(E,h)$ such that
if $u\in L^2(\R;\C^2)$ with $\|u\|_{L^2}\le1$ satisfies
\begin{equation}
(P-E)u\equiv 0
%\quad \text{microlocally near}\quad \Omega,
\end{equation}
in a neighborhood of $(0,0)$, and if 
\begin{equation}
\label{gammaflat}
u\equiv \alpha_j^\dir f_j^\dir\quad \text{on }\gamma_j^\dir\quad(j=1,2,\ \dir=\inc,\out),
\end{equation}
for some constants $\alpha_j^\dir = \alpha_j^\dir (u,E,h)\in \mathbb C$, then we have
\be\label{transfer}
\biggl(\begin{matrix}
\alpha_{1}^\sharp\\
\alpha_{2}^\sharp
\end{matrix}\biggr)
=
T(E,h)
\biggl(
\begin{matrix}
\alpha_{1}^\flat\\
\alpha_{2}^\flat
\end{matrix}\biggr).
\ee
Moreover, $T$ satisfies the asymptotic behavior
\be\label{eq:Thm2}
iT(E,h)= {\rm Id}_2 - i\kappa_n (\lambda) h^{\frac 1{2n+1}}\begin{pmatrix} 
0& 1\\
1&0
\end{pmatrix}+\ord(h^{\frac {2-\epsilon}{2n+1}%-\frac3{(2n+1)(4n+3)}
}), 
\ee
uniformly with respect to $h>0$ small enough, where ${\rm Id}_2$ is the $2\times 2$ identity matrix, $\lambda=E/h^{\frac{2}{2n+1}}$, $\kappa_n$ is given by \eqref{kpn}, and $0\le\epsilon<1$ is given in Theorem~\ref{MAINTH}.
\end{theorem}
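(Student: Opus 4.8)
The plan is to prove Theorem~\ref{prop:crossing-turning} by a direct construction of exact solutions near the crossing-turning point $x=0$, following the strategy announced in the introduction. First I would build, by a convergent successive approximation scheme for the system $(P-E)w=0$, a fundamental system of four exact solutions $(w_1^\flat, w_2^\flat, w_1^\sharp, w_2^\sharp)$ on a fixed small real interval around $x=0$, together with two distinguished exact solutions $(w_1^-, w_2^-)$ that remain $L^2$-bounded (microlocally nontrivial) on both sides of $x=0$. The natural zeroth-order Ansatz is $w_j^\dir = f_j^\dir$ modulo turning-point corrections, i.e. the diagonal WKB/Airy solutions attached to $P_1$ and $P_2$ near their respective turning points $a_j(E)$ (which is why Appendix~\ref{Appendix B} on single-turning-point solutions is needed), and then iterate against the off-diagonal coupling $hU$. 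Since the coupling carries an explicit factor $h$, the Volterra-type iteration converges and produces solutions whose $k$-th iterate is $O(h^k)$ in the relevant norm; the key point is to keep uniform control of these solutions and of their connection across the turning points as $E$ ranges over $\cR(Lh^{2/(2n+1)},Lh)$, where the two turning points $a_1(E),a_2(E)$ coalesce at rate governed by $E$ and $E^n$.

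Next I would compute the connection of each $w_j^\dir$ to the microlocal WKB data $\alpha_j^\dir f_j^\dir$ on the incoming/outgoing trajectories: because we know explicitly the turning-point connection formulae for the scalar equations, and the coupling is a small perturbation, each $w_j^\dir$ is, microlocally on $\gamma_k^\bullet$, a known linear combination of the $f_k^\bullet$'s up to $O(h^{1/(2n+1)})$. Writing $(w_1^-,w_2^-)$ as a linear combination \eqref{coeff} of $(w_1^\flat,w_2^\flat,w_1^\sharp,w_2^\sharp)$ with a $4\times 2$ coefficient matrix, and using that $(w_1^-,w_2^-)$ is a basis of microlocal solutions at $(0,0)$, the transfer matrix $T(E,h)$ is read off by eliminating the $w^-$'s: the incoming block of the $4\times 2$ matrix must be inverted and composed with the outgoing block. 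Establishing the relation \eqref{transfer} in this way is essentially linear algebra once the four-by-four connection data is in hand; the subtle part is tracking which entries are $O(1)$, which are $O(h^{1/(2n+1)})$, and which are smaller.

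The heart of the proof — and the step I expect to be the main obstacle — is the asymptotic evaluation of the off-diagonal entries of $T$, which by the successive approximation reduce to the oscillatory integral \eqref{IEh}, namely an integral over $x$ near $0$ of the product $r_0(x)(E-V_1(x))^{-1/4}(E-V_2(x))^{-1/4}\exp\!\big(\tfrac{i}{h}(\phi_1^\dir(x)-\phi_2^\dir(x))\big)$ plus lower-order terms. One must show that the phase difference $\phi_1-\phi_2$ vanishes to order $n$ at the crossing, so that after the substitution $x=z^2$ (which simultaneously removes the $x^{-1/2}$ singularities of the amplitude coming from the turning points and turns the phase into a $z^{2n+1}$-type function) the integral becomes a clean degenerate stationary-phase integral in $z$. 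For $\re E = o(h^{4/(2n+1)})$ the leading behavior is simply of order $h^{1/(2n+1)}$ with a constant coefficient; for the full range $\re E = O(h^{2/(2n+1)})$ the two would-be stationary points of $z^{2n+1}$-type confluence separate at scale $\lambda = E/h^{2/(2n+1)}$, and a rescaling $z = h^{1/(2(2n+1))}\eta$ (roughly) exhibits the integral as $h^{1/(2n+1)}$ times the generalized Airy function $A_n$ evaluated at the rescaled energy, yielding exactly \eqref{kpn}. The delicate bookkeeping here is: (i) justifying uniform remainder estimates as the critical points confluence, which forces the $h$-dependent error exponents $\epsilon$ stated in Theorem~\ref{MAINTH} (the loss being worst for small $n$ and improving as $n$ grows, matching $\epsilon = 0,\tfrac12,\tfrac{3}{4n+3}$); (ii) controlling the higher iterates of the successive approximation, each contributing multiple nested oscillatory integrals, and showing they are subdominant; and (iii) checking that the diagonal entries of $iT$ are ${\rm Id}_2 + O(h^{(2-\epsilon)/(2n+1)})$, i.e. the leading reflection/transmission through the turning point is trivial at the needed order. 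Once \eqref{IEh} is evaluated and assembled with the linear-algebra step, \eqref{eq:Thm2} follows.
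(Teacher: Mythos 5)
Your skeleton coincides with the paper's actual proof: construct $(w_1^\flat,w_2^\flat,w_1^\sharp,w_2^\sharp)$ and $(w_1^-,w_2^-)$ by the convergent successive approximation of \cite{FMW1}, express $(w_1^-,w_2^-)$ in the basis of the four local solutions as in \eqref{coeff}, read off $T$ as in \eqref{T} by inverting the incoming block, and reduce the off-diagonal entries to the integral \eqref{IEh}, evaluated by the substitution $x=a-z^2$ and a degenerate stationary phase producing $A_n$. However, two concrete points in your plan would fail as written. First, the claim that the $k$-th iterate of the scheme is $\ord(h^k)$ is incorrect near the crossing-turning point: the explicit factor $h$ in the coupling is cancelled by the $h^{-1}$ in the Volterra kernel \eqref{eq:Def-K-Gen}, and the Airy-type amplitudes are of size $h^{-1/6}$ on $\ord(h^{2/3})$-neighborhoods of the turning points, so a double iterate gains only $h^{\frac{1}{2n+1}}$ (resp.\ $h^{\frac13}$ on the right), cf.\ \eqref{k1k2}--\eqref{a-}. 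This is not cosmetic: the second-order coefficients ($\beta$-type terms) are $\ord(h^{\frac{2}{2n+1}})$, i.e.\ exactly at the error threshold claimed in \eqref{eq:Thm2}, so they cannot be dismissed as $\ord(h^2)$; the whole remainder bookkeeping, and in particular the statement that the diagonal of $iT$ is $\mathrm{Id}_2$ up to $\ord(h^{\frac{2-\epsilon}{2n+1}})$, requires these sharper iterate estimates.

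Second, your single method --- replace the Airy factors by their WKB asymptotics away from the turning points and then apply stationary phase after $x=z^2$ --- cannot produce the leading term when $n=1$. Discarding an $\ord(h^{2\nu})$-neighborhood of the (coalescing) turning points costs $\ord(h^{\nu})$ with necessarily $\nu\le\frac13$ (Lemma~\ref{lem:red-wkb}), while the leading term has size $h^{\frac{1}{2n+1}}$; for $n\ge2$ one can take $\frac1{2n+1}<\nu\le\frac13$, but for $n=1$ the main term is itself of order $h^{1/3}$ and is swallowed by the cutoff error. The paper therefore treats $n=1$ by a different argument: the Airy--Airy integral \eqref{IEh} is computed exactly by Fourier transform (Lemma~\ref{lem:AiAi-}, using $A_1=\Ai$), which is also what yields the sharper remainder $\epsilon=0$ in that case. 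To cover $n=1$ and to justify the stated exponents $\epsilon$ you would need either this exact identity or a genuinely uniform treatment of the two coalescing critical/turning points, neither of which is supplied by the sketch. A smaller omission: you assert, but do not argue, that $(w_1^-,w_2^-)$ spans the microlocal solution space at $(0,0)$ (boundedness on both sides alone does not give that every microlocal solution $u$ as in the statement agrees with an exact combination); the paper proves this separately via a quasi-mode construction and Cramer's rule, and this step is what makes the matrix $T$ act on arbitrary $u$ in \eqref{transfer}.
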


\begin{remark}[Improved asymptotics in a smaller region]\label{rem1} If we reduce the real part of the domain ${\mathcal R}$ with $\dl_1=o(h^{\frac4{2n+1}})$ and if $n\ge2$, then we have the following better asymptotic formulae than \eqref{eq:Thm2} and \eqref{Impart}. In this case, we have instead of formula \eqref{eq:Thm2} 
\begin{equation}\label{FIM1}
iT(E,h)={\rm Id}_2- i\kappa_n(0) h^{\frac 1{2n+1}}\begin{pmatrix}
0& 1\\
1&0
\end{pmatrix}+\ord\left(h^{\frac2{2n+1}}+h^{\frac13}\right),
\end{equation}
uniformly with respect to $h>0$ small enough. Moreover, formula \eqref{Impart} is improved to the following one with a constant leading coefficient and a smaller error 
\begin{equation}\label{FIM2}
{\rm Im}\, z_h(E) = -
\frac{\kappa_n(0)^2}{\cA'(0)}h^{\frac{2n+3}{2n+1}}+\ord(h^{\frac{2n+4}{2n+1}}),
\end{equation}
uniformly as $h\to0^+$. The constant $\kappa_n(0)$ is given by
\begin{equation}\label{eq:kappa-m0}
\kappa_n(0)= 2r_0(0) \left(\frac{(2n+1)!}{\vert H_{p_1}^{2n} p_2(0,0)\vert}\right)^{\frac{1}{2n+1}}        \bm{\Gamma}\left(\frac{2n+2}{2n+1}\right)\cos\left(\frac\pi{2(2n+1)}\right),
\end{equation}
where $\bm{\Gamma}$ is the usual Gamma function defined by \eqref{Gf}.  
%Remark that one deduces $\kappa_n^0=\kappa_n(0)$ from
%\ben
%A_{n}(0)=\frac{(2n+1)^{\frac1{2n+1}}} \pi \Gamma\left(\frac1{m+1}\right)\cos\left(\frac\pi{2(2n+1)}\right).
%\een
\end{remark}

\begin{remark}
We finish this section by comparing our result with the one in \cite{AFH1} where the energy is above the crossing level. The coefficient $\kappa_n(0)$ in \eqref{FIM1} and \eqref{FIM2} coincides with the one appearing in the formula (3.8) of \cite{AFH1} on the asymptotics of the microlocal transfer matrix (called $\omega_\rho$ there), more precisely $\kappa_n(0)= \omega_{(0,0)}$. Notice also that the error $h^{\frac{2n+4}{2n+1}}$ in \eqref{FIM2} coincides with that in \cite{AFH1} if $n\geq 2$. Finally, it is worth mentioning that the multiplication by the complex number $i$ in the asymptotics of the transfer matrix \eqref{eq:Thm2} is related to the Maslov index since the crossing occurs at a turning point which was excluded in our previous work.
\end{remark}

\begin{comment}
\begin{remark}
Put $v_0=\left|V_1'(0)\right|=\left|V_2'(0)\right|$, $v_j:=V_j^{(n)}(0)$, $m=2n$. 
For $m\ge4$, one has
\be\label{eq:crossing-turning}
\begin{aligned}
&\kappa=-\frac{2r_0(0)}{(v_0)^{2/3}}\left(\frac{3\sqrt{v_0}(1+n)!B(\frac52-n,n)}{2|v_1-v_2|}\right)^{1/(2n+1)}\Theta_n(E,h),
\\&\Theta_n(E,h)=\sin\left(\frac{n\pi}{n+2}+f\sin\theta\right)\int_0^{+\infty}z^{\frac1{2n+1}-1} \exp\left(-z-f\cos\theta\,z^{\frac1{2n+1}}\right)dz,
\\&\theta=n\ope{arg}E +\frac{(-1)^{n-1}\pi}{2(2n+1)},\quad
f=\left(\frac{3(n+1)\left|v_1-v_2\right|^{2n}|B|}{2v_0^{3n+1}(n !)^{2n}}\right)^{\frac1{2n+1}}
%\left(\frac3{2|q|}\right)^{\frac1{m+1}}\frac{\left|v_2-v_1\right|}{v_0^{\frac{m+1}2}(m/2)!}
\left(\frac{|E|}{h^{2/(2n+1)}}\right)^{n}.
\end{aligned}
\ee
For $m=2$, one has
\be
\kappa=-2\frac{(\ope{sgn}(v_2-v_1)) \pi r_0(0)}{(v_1v_2)^{1/6}|v_2-v_1|^{1/3}}\Ai\left(-\left(\frac{v_2-v_1}{v_1v_2}\right)^{2/3}h^{-2/3}E\right).
\ee
In particular, when $E=\ord(h^{1/n})$, we have 
\be
\Theta_n(E,h)=\sin\left(\frac{n\pi}{2n+1}\right)\Gamma\left(\frac{2(n+1)}{2n+1}\right)+\ord(h^{1/(2n+1)}).
\ee
\end{remark}
\end{comment}

\section{Proof of the microlocal connection formulae}\label{app:Connection}

%The proof is consists in an argument on the Wronskians of local (NOT microlocal) exact solutions while the connection formulae for a  transversal crossing point were proved by using normal forms in the previous work \cite{Cdv1,FMW3,AsFu,Hi1}. 

%Each connection formula is proved in two steps. In the first step, we show that the space of the microlocal solutions near a crossing point is 2-dimensional by a similar argument as in \cite{CdvPa}. This implies the existence of $2\times 2$ transfer matrix. 
%Then in the second step, we observe microlocal behavior of a (microlocally) linearly independent pair of local solutions. 

%In both of these two steps, we use a local exact solutions to the system constructed in a similar way to \cite{FMW1,FMW3,AsFu}. 
%For the construction, we use solutions to each scalar equation $(P_1(h)-E)u=0$ and $(P_2(h)-E)u=0$.

%In what follows, we prove Lemma~\ref{lem:key} and Propositions~\ref{prop:ConnectionAllowed} and \ref{prop:crossing-turning} by constructing  exact solutions $w_1$, $w_2$ spanning the space of the space of microlocal solutions, and ones $w_3$, $w_4$ consists with $w_1,w_2$ a basis of the space of exact solutions in each case. Then the asymptotic formulae for $T_\rho$ (Propositions~\ref{prop:ConnectionAllowed}, \ref{prop:crossing-turning}) are given by computing $T^\out$ and $T^\inc$ with the representation \eqref{eq:repTby-tjk}. 
%
%More precisely, 
We first review  the construction of exact solutions due to \cite{FMW1}. 
%
%Put $I_\dl:=\rho_x+[-\dl,\dl]$ with $\dl>0$, and 
%Let $\chi\in C_0^\infty(\R)$ be a cutoff function with support in $[-2\delta, 2\delta]$ which is equal to 1 on $I_\delta:=[-\delta,\delta]$.
If $(u, u^\dagger)$ is a pair of independent solutions to the scalar homogeneous equation $(-h^2\frac {d^2}{dx^2}+V(x)-E)u=0$, the
solutions to the inhomogeneous equation $(-h^2\frac {d^2}{dx^2}+V(x)-E)v=f$ are written in the form
\be\label{eq:Def-K-Gen}
K(u,u^\dagger;s,t)[f](x):=\frac{1}{h^2\W(u,u^\dagger)}\left(u(x)\int_{s}^x u^\dagger(y)f(y)dy-u^\dagger(x)\int_{t}^x u(y)f(y)dy\right)
\ee
for any $f\in C_0^\infty(\R)$ with arbitrary constants $s,t\in [-\infty,+\infty]$. Here $\W(u,u^\dagger):={\rm det}(u,u^\dagger)$ stands for the Wronskian of $u$ and $u^\dagger$. 

In the following we use the notation $U_1:=U$ and $U_2 := U^*$. Let $(u_j,u^\dagger_j)$ be a pair of independent solutions to $(P_j-E)u=0$ and set $K_j:=K(u_j,u^\dagger_j; s,t)$ and $\til{K}_j:=-hK_jU_j$.  We suppose here that the interaction functions $r_0(x)$ and $r_1(x)$ have a compact support near $x=0$. Then the operator $\widetilde K_j
$ is applied to smooth functions without compact support.

Remark that the function $\til K_j[f](x)$ and its derivative take at $x=0$ the following values:
\begin{align}
\til K_j[f](0)=C_j(u_j,u^\dagger_j;s)[f]u_j(0)+C_j(u^\dagger_j, u_j;t)[f]u^\dagger_j(0), 
\label{Ku}\\
(\til K_j[f])'(0)=C_j(u_j,u_j^\dagger;s)[f]u_j'(0)+C_j(u^\dagger_j, u_j;t)[f](u^\dagger_j)'(0),
\label{Ku'}
\end{align}
where
$$
C_j(u,u^\dagger;s)[f]:=\frac{1}{h\W(u,u^\dagger)}\int^{s}_0 u^\dagger(y)U_jf(y)dy.
$$
Notice that $C_j(u,v;0)=0$ is a trivial functional.

If the infinite sums
\be\label{eq:Inf-Sum}
J_1=\sum_{k\ge0}(\til{K}_1\til{K}_2)^k,\quad J_2=\sum_{k\ge0}(\til{K}_2\til{K}_1)^k
\ee 
converge in some suitable space, then 
\be\label{Def-Sol-Gen}
w_1=w_1(K_1,K_2,u_1):=
\begin{pmatrix}J_1u_1\\\til{K}_2J_1u_1\end{pmatrix},\quad
w_2=w_2(K_1,K_2,u_2):=
\begin{pmatrix}\til{K}_1J_2 u_2  \\J_2u_2\end{pmatrix},
\ee
 are exact solutions to the system
 \be
 \label{system}
 (P-E)u=0.
 \ee

For $j=1,2$, let $(u^\out_j ,u^\inc_j )$ be the outgoing and incoming solutions on the left of the origin, and $(u^-_j,u^+_j)$ the exponentially decaying and growing solutions on the right to $(P_j-E)u=0$ given in \eqref{eq:out-inc-Airy-sol} and \eqref{eq:tj-scalar-pos}. 
Using these solutions, we define exact solutions to the system
\begin{align*}
&w_j^-:=w_{j}(K_1^{\rm p},K_2^{\rm p},u_j^-)
&&\text{with }
K_j^{\rm p}:=K(u_j^-,u_j^+,0,\infty),\quad &&j=1,2, \\
%w_1^+:=w_1(K_1^p,K_2^p,u_1^+),\quad
%w_2^+:=w_2(K_1^+,K_2^p,u_2^+),\quad
&w_j^\dir :=w_j(K_1^{\rm a},K_2^{\rm a},u_j^\dir)
&&\text{with }
K_j^{\rm a}:=K(u_j^\out,u_j^\inc,-\infty,-\infty),\quad &&j=1,2,\,\,\,\dir=\out,\inc.
\end{align*}
We denote 
$J_1^{\rm p}:=\sum (\til{K}^{\rm p}_1\til{K}^{\rm p}_2)^k$, $J_1^{\rm a}:=\sum (\til{K}^{\rm a}_1\til{K}^{\rm a}_2)^k$ and so on.  For $\{j,\hat j\}=\{1,2\}$, $(\dir,\dir')\in\{\out,\inc\}^2$ and $\{\dir,\hat\dir\}=\{\out,\inc\}$, we set
\begin{align}
\label{al}
&\alpha_j :=C_{\hat j}(u_{\hat j}^+,u_{\hat j}^-;+\infty)[J_j^{\rm p}u_j^-],\quad
\alpha_{j,\dir'}^{\dir} :=C_{\hat j}(u_{\hat j}^{\dir},u_{\hat j}^{\hat\dir};-\infty)[J_j^{\rm a}u_j^{\dir'}],\\
\label{be}
&\beta_j :=C_j(u_{ j}^+,u_{j}^-;+\infty)[\til K_{\hat j}^{\rm p}J_j^{\rm p}u_j^-],\quad
\beta_{j,\dir'}^{\dir} :=C_{j}(u_{j}^{\dir},u_{j}^{\hat\dir};-\infty)[\til K_{\hat j}^{\rm a}J_j^{\rm a}u_j^{\dir'}]. 
\end{align} 
These quantities and the estimates given in the Lemma below are important in our study of the microlocal connection formula. In the following, $C(I_L)$ and $C(I_R,u_{\hat j}^-)$ stand respectively for the Banach spaces of continuous functions on $I_L:=(-\infty,0]$ and $I_R:=[0,+\infty)$ with the norms given by $\|f\|_{C(I_L)} :=\sup_{I_L}\left|f\right|$ and $\|f\|_{C(I_R,u_{\hat j}^-)} :=\sup_{I_R}\left|f/u_{\hat j}^-\right|$, and  $\|\cdot\|_{\mc{B}(B)}$ stands for the norm of bounded operators on a Banach space $B$. 
\begin{lemma}
\label{estimate}
We have the following estimates uniformly as $h\to0^+$ for $E\in\cR$:
\be
\label{k1k2}
\left\|\til{K}_1^{\rm a}\til{K}_2^{\rm a}\right\|_{\mc{B}(C(I_L))}=\ord(h^{\frac1{2n+1}}),\quad
\left\|\til{K}_2^{\rm a}\til{K}_1^{\rm a}\right\|_{\mc{B}(C(I_L))}=\ord(h^{\frac1{2n+1}}),
\ee
\be
\label{kj}
\left\|\til{K}_j^{\rm p}\right\|_{\mc{B}(C(I_R,u_{\hat{j}}^-))}=
\ord(h^{\frac13}),
\ee
Moreover, for  $j=1,2$, $(\dir,\dir')\in\{\out,\inc\}^2$, one has 
\be
\label{aa}
\alpha_{j,\dir'}^{\dir}=\ord(h^{\frac1{2n+1}}),\quad
\beta_{j,\dir'}^{\dir}=\ord(h^{\frac2{2n+1}}),
\ee
\be
\label{a-}
\alpha_j =\ord(h^{\frac13}),\quad
 \beta_j= \ord(h^{\frac 23}).
\ee
\end{lemma}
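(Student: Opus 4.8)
## Proof Plan for Lemma \ref{estimate}

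The plan is to reduce all four estimates to the asymptotic analysis of oscillatory integrals whose phases come from the WKB solutions of the scalar Schr\"odinger operators $P_1$ and $P_2$ near the crossing-turning point $x=0$. First I would recall that $\til K_j^{\rm a} = -h K_j^{\rm a} U_j$ and $\til K_j^{\rm p} = -h K_j^{\rm p} U_j$, so each product such as $\til K_1^{\rm a}\til K_2^{\rm a}$ is, after unwinding \eqref{eq:Def-K-Gen}, an integral operator with kernel built from the four scalar solutions $u_1^\out, u_1^\inc, u_2^\out, u_2^\inc$ (resp. $u_j^\pm$ on the right), each divided by a Wronskian of size $\sim h$ (the standard normalization of Airy-type solutions near a simple turning point, whose explicit values and uniform bounds are recorded in Appendix \ref{Appendix B}). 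The operator norm on $C(I_L)$ (resp. $C(I_R,u_{\hat j}^-)$) is then controlled by taking the supremum over $x$ of the absolute value of the kernel integrated against $1$ (resp. against $u_{\hat j}^-$). Because the interaction coefficients $r_0,r_1$ are supported near $x=0$ and $U_j$ is first order, the $y$-integration is effectively over a small neighborhood of $0$, and the products $u_j^\out u_j^\inc$, $|u_j^-|^2$, etc., combine with the Airy-solution bounds to leave us with an oscillatory factor $e^{i(\phi_j^\dir - \phi_{\hat j}^{\hat\dir})/h}$ times an amplitude with at worst an integrable $(E-V_j)^{-1/4}$-type singularity at the turning points $a_j(E)$.

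Next I would carry out the stationary/degenerate phase analysis that is the heart of the matter. For the left-hand estimates \eqref{k1k2}, the relevant phase difference is $\phi_1^\inc - \phi_2^\inc$ (and its variants), whose $x$-derivative is $\sqrt{E-V_1} - \sqrt{E-V_2}$; by \eqref{contact-o} this vanishes to order $n$ at $x=0$ when $\re E$ is small, and the turning points $a_1(E), a_2(E)$ are both $O(E)$ and separated by $O(E^n)$. As explained in the introduction, after the substitution $x = z^2$ the phase becomes a perturbation of $z^{2n+1}$ and the $(E-V_j)^{-1/4}$ singularities are absorbed into the Jacobian; a uniform degenerate-stationary-phase estimate (the generalized Airy scaling of Appendix \ref{airyapp}, applied with the large parameter $1/h$ and the parameter $\lambda = E/h^{2/(2n+1)}$ kept bounded on $\cR$) then yields the gain $h^{1/(2n+1)}$ and hence \eqref{k1k2}. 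For the right-hand estimate \eqref{kj}, the operator $\til K_j^{\rm p}$ involves a single scalar operator $P_j$ near its simple turning point, with no phase cancellation to exploit; here the bound $h^{1/3}$ is the familiar Airy-function estimate near one simple turning point (again from Appendix \ref{Appendix B}), coming from the $(E-V_j)^{-1/4}$ singularity integrated against the decaying solution $u_{\hat j}^-$. The estimates \eqref{aa} and \eqref{a-} for $\alpha_{j,\dir'}^\dir, \beta_{j,\dir'}^\dir, \alpha_j, \beta_j$ then follow by expanding $J_j^{\rm a} = \sum_k(\til K_1^{\rm a}\til K_2^{\rm a})^k$ (resp. $J_j^{\rm p}$), using \eqref{k1k2}–\eqref{kj} to see the series converge with the leading term $k=0$, and applying to that leading term the same oscillatory-integral bound: $\alpha_{j,\dir'}^\dir$ is one application of the $C_{\hat j}$-functional to $u_j^{\dir'}$, giving $O(h^{1/(2n+1)})$, while $\beta_{j,\dir'}^\dir$ carries one extra factor $\til K_{\hat j}^{\rm a}$, giving $O(h^{2/(2n+1)})$; similarly on the right with $1/3$ and $2/3$.

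The main obstacle I anticipate is the uniformity of the degenerate stationary phase estimate \emph{with respect to $E \in \cR$}, i.e. as the two turning points $a_1(E), a_2(E)$ collide with the crossing point when $E \to 0$ and, conversely, as they separate and the single order-$n$ critical point splits into configurations controlled by $\lambda = E/h^{2/(2n+1)}$ near the boundary of $\cR$. Handling this requires the change of variables $x = z^2$ to be carried out carefully so that the resulting phase is a smooth, uniformly non-degenerate deformation of $z^{2n+1}$ in the relevant range of $\lambda$ — this is exactly the confluence of critical points that the generalized Airy function $A_n$ is designed to capture — together with uniform control of the remainder, which in turn produces the $\epsilon$-dependent error exponents quoted in Theorems \ref{MAINTH} and \ref{prop:crossing-turning} (and which for this Lemma we only need at the level of the leading power). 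Since here we claim only the order bounds and not the precise coefficients, I would prove a clean one-sided estimate: bound the integral by splitting $x \in [-\delta,\delta]$ into the region near the turning points (treated by the $z$-substitution and a crude $L^1$ bound on the Airy-type amplitude) and the region away from them (non-stationary phase, giving $O(h)$), optimizing the cut to recover $h^{1/(2n+1)}$ and $h^{1/3}$ respectively.
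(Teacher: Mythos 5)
Your plan follows the paper's own (sketched) proof essentially step for step: the ``a''-side quantities and the kernels of $\til{K}_1^{\rm a}\til{K}_2^{\rm a}$, $\til{K}_2^{\rm a}\til{K}_1^{\rm a}$ are reduced to oscillatory integrals of the type treated in Proposition~\ref{prop:key-int} (degenerate stationary phase after the substitution $x=z^2$, uniformly in $\lambda=E/h^{2/(2n+1)}$), the ``p''-side quantities to the Laplace/Airy-region estimate ($u_j^{\pm}=\ord(h^{-1/6})$ on an $\ord(h^{2/3})$-neighborhood of the turning point, giving $h^{1/3}$), combined with convergence of the Neumann series and the observation that each $\beta$ carries one extra integral, whence the squared orders. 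One small correction: the Wronskians are of size $h^{-1}$ (e.g.\ $\W(u^\out,u^\inc)=2ih^{-1}$), not $h$ as you wrote; it is the prefactor $1/(h^{2}\W)$ in \eqref{eq:Def-K-Gen} together with the extra factor $h$ in $\til{K}_j=-hK_jU_j$ that makes your power counting come out as claimed.
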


\begin{proof}
This kind of argument has already been done in \cite{FMW1} and \cite{AFH1}. Here we just sketch the essence of the proof. The integral kernels of the operators $\til{K}_1^{\rm a}\til{K}_2^{\rm a}$ and $\til{K}_2^{\rm a}\til{K}_1^{\rm a}$ are oscillatory integrals similar to that in Proposition~\ref{prop:key-int}. This fact with the estimate in Proposition~\ref{prop:key-int} shows \eqref{k1k2}. 
The estimates \eqref{aa} follows from the same oscillatory integrals. Note that we have two integrals in $\beta_{j,\dir'}^{\dir}$. This is the reason why the order is the square of that for $\alpha_{j,\dir'}^{\dir}$.

The estimate for the operator $\til{K}_j^{\rm p}$ is well studied  in \cite{As}. We use the Laplace method and the fact that the function $u_j^{\pm}$ is of $\ord(h^{-1/6})$ in a $\ord(h^{2/3})$-neighborhood of the origin. The latter gives a larger contribution of order $h^{1/3}$ to the norm. This gives the estimates \eqref{kj} and \eqref{a-}.
\end{proof}

The family $(w_1^\inc,w_2^\inc,w_1^\out,w_2^\out)$ is a basis of local solutions in a neighborhood of $x=0$. 
On the other hand, the pair $(w_1^-, w_2^-)$ can be regarded as a basis of microlocal solutions in a neighborhood of $(x,\xi)=(0,0)$, since 
they are bounded as $h\to 0^+$ on both sides of the origin. Remark that this is not true for $w_1^\inc,w_2^\inc,w_1^\out,w_2^\out$. They are exponentially large on the right.

We write $(w_1^-,w_2^-)$ in the linear combination of $(w_1^\inc ,w_2^\inc ,w_1^\out  ,w_2^\out )$:
\be
\label{coeff}
(w_1^-,w_2^-)=(w_1^\inc ,w_2^\inc ,w_1^\out  ,w_2^\out )\begin{pmatrix}
A^\inc(E,h) \\ A^\out(E,h)
\end{pmatrix},
\ee
where $A^\inc$, $A^\out$ are  $2\times 2$ matrices.
The microlocal asymptotic behavior of each solution on each $\gamma_j^\dir$, away from the crossing point, is given for $j=1,2,$ $\dir=\inc,\out$ by
\be\label{eq:AlmostIB}
w_j^\dir\equiv 
\left\{
\begin{array}{l}
f_{\gamma_j^\dir}\qtext{near }\gamma_j^\dir, \\
\ord(h)\qtext{near }(\gamma_1^\inc\cup\gamma_1^\out\cup\gamma_2^\inc\cup\gamma_2^\out)\setminus\gamma_j^\dir.
\end{array}
\right.
\ee
This implies that the transfer matrix $T$ is given by
\be 
\label{T}
T(E,h)=A^\out(E,h)A^\inc(E,h)^{-1}+\ord(h).
\ee
In fact, we have
$$
(w_1^-,w_2^-)A^\inc(E,h)^{-1}=(w_1^\inc ,w_2^\inc ,w_1^\out  ,w_2^\out )\begin{pmatrix}
{\rm Id}_2  \\ A^\out(E,h)A^\inc(E,h)^{-1}
\end{pmatrix},
$$ 
which means that there exist two microlocal solutions
\be\label{inc-sol}
w_1^\inc+\tau_{11}w_1^\out+\tau_{21}w_2^\out,\quad w_2^\inc+\tau_{12}w_1^\out+\tau_{22}w_2^\out,
\ee
where $\tau_{jk}$ is the $(j,k)$-entry of the matrix $A^\out(E,h)A^\inc(E,h)^{-1}$. These microlocal solutions satisfy 
\eqref{gammaflat} %and \eqref{gammasharp} 
modulo $\ord (h)$ with 
$(\alpha_1^\flat, \alpha_2^\flat, \alpha_1^\sharp, \alpha_2^\sharp)=
(1,0,\tau_{11}, \tau_{21})$ and $(0,1,\tau_{12}, \tau_{22})$ respectively.

%and, if we write $\tilde t_{jk}$ the $(j,k)$ entry of the matrix $A^\out(E,h)A^\inc(E,h)^{-1}$, this means that the solutions
%$w_1^\inc+\tilde t_{11}w_1^\out+\tilde t_{21}w_2^\out$,
%$w_2^\inc+\tilde t_{12}w_1^\out+\tilde t_{22}w_2^\out$ are bounded solution in a neighborhood of the origin as $h\to +0$.

Let us now compute the semiclassical asymptotics of the matrices $A^\out(E,h)$ and $A^\inc(E,h)$ defined by \eqref{coeff}.
To do this, we make use of the identity at the origin $x=0$:
\be
\begin{pmatrix}
\label{origin}
w_1^-&w_2^-\\
(w_1^-)'&(w_2^-)'
\end{pmatrix}|_{x=0}=
\begin{pmatrix}
w_1^\inc &w_2^\inc&w_1^\out&w_2^\out\\
(w_1^\inc)' &(w_2^\inc)'&(w_1^\out)'&(w_2^\out)'
\end{pmatrix}|_{x=0}\begin{pmatrix}
A^\inc(E,h) \\ A^\out(E,h)
\end{pmatrix}.
\ee

We have
$$
w_1^-=\begin{pmatrix}J_1^{\rm p}u_1^-\\\til{K}_2^{\rm p}J_1^{\rm p}u_1^-\end{pmatrix},\quad
w_2^-=\begin{pmatrix}\til{K}_1^{\rm p}J_2^{\rm p}{u}_2^-\\J_2^{\rm p}u_2^-\end{pmatrix},
\quad
w_1^\bullet=\begin{pmatrix}J_1^{\rm a}u_1^\bullet\\\til{K}_2^{\rm a}J_1^{\rm a}u_1^\bullet\end{pmatrix},\quad
w_2^\bullet=\begin{pmatrix}\til{K}_1^{\rm a}J_2^{\rm a}{u}_2^\bullet\\J_2^{\rm a}u_2^\bullet\end{pmatrix}.
$$
The values of these vector-valued functions and their derivatives at the origin, say of $w_1^-$, can be written using \eqref{Ku} and \eqref{Ku'} in the form
$$
w_1^-(0)=\begin{pmatrix}u_1^-(0)+\til{K}_1^{\rm p}(\til{K}_2^{\rm p}J_1^{\rm p}u_1^-)(0)\\
\til{K}_2^{\rm p}(J_1^{\rm p}u_1^-)(0)\end{pmatrix}
=\begin{pmatrix}u_1^-(0)+\beta_1u_1^+(0)\\
\alpha_1u_2^+(0)\end{pmatrix},
$$
$$
(w_1^-)'(0)
=\begin{pmatrix}(u_1^-)'(0)+\beta_1(u_1^+)'(0)\\
\alpha_1(u_2^+)'(0)\end{pmatrix},
$$
Similarly, we can express $w_2^-$, $w_1^\inc$, $w_2^\inc$, $w_1^\out$ and $w_2^\out$ in terms of the constant vectors
$$
{\bf u}_1^\pm=\begin{pmatrix}u_1^\pm(0)\\ 0\\ (u_1^\pm)'(0)\\ 0\end{pmatrix},\quad
{\bf u}_2^\pm=\begin{pmatrix}0 \\u_2^\pm(0)\\ 0\\ (u_2^\pm)'(0)\end{pmatrix},\quad
{\bf u}_1^\bullet=\begin{pmatrix}u_1^\bullet(0)\\ 0\\ (u_1^\bullet)'(0)\\ 0\end{pmatrix},\quad
{\bf u}_2^\bullet=\begin{pmatrix}0 \\u_2^\bullet(0)\\ 0\\ (u_2^\bullet)'(0)\end{pmatrix},
$$
and obtain
\begin{equation}
\label{wpm}
\begin{pmatrix}
w_1^-&w_2^-\\
(w_1^-)'&(w_2^-)'
\end{pmatrix}|_{x=0}
=\begin{pmatrix}
{\bf u}_1^-&{\bf u}_2^-&{\bf u}_1^+&{\bf u}_2^+
\end{pmatrix}
\begin{pmatrix}
1	&0	\\
0	&1	\\
\beta_1	&\alpha_2	\\
\alpha_1	&\beta_2	
\end{pmatrix},
\end{equation}
\begin{align}
&\begin{pmatrix}
w_1^\inc &w_2^\inc&w_1^\out&w_2^\out\\
(w_1^\inc)' &(w_2^\inc)'&(w_1^\out)'&(w_2^\out)'
\end{pmatrix}|_{x=0} \\
&\label{wfs}
=\begin{pmatrix}
{\bf u}_1^\inc&{\bf u}_2^\inc&{\bf u}_1^\out&{\bf u}_2^\out
\end{pmatrix}\begin{pmatrix}
1+\beta_{1,\inc}^\inc 	&\alpha_{2,\inc}^\inc 	&\beta_{1,\out}^\inc 	&\alpha_{2,\out}^\inc \\
\alpha_{1,\inc}^\inc 	&1+\beta_{2,\inc}^\inc 	&\alpha_{1,\out}^\inc 	&\beta_{2,\out}^\inc \\
\beta_{1,\inc}^\out 	&\alpha_{2,\inc}^\out 	&1+\beta_{1,\out}^\out 	&\alpha_{2,\out}^\out \\
\alpha_{1,\inc}^\out 	&\beta_{2,\inc}^\out 	&\alpha_{1,\out}^\out 	&1+\beta_{2,\out}^\out 
\end{pmatrix}.
\end{align}
On the other hand, we have the relation \eqref{eq:tj-scalar-pos} between the solutions to the scalar Schr\"odinger equations which yields
\begin{equation}
\label{pmflat}
\begin{pmatrix}
{\bf u}_1^-&{\bf u}_2^-&{\bf u}_1^+&{\bf u}_2^+
\end{pmatrix}=
\begin{pmatrix}
{\bf u}_1^\inc&{\bf u}_2^\inc&{\bf u}_1^\out&{\bf u}_2^\out
\end{pmatrix}\frac{1}{2}
\begin{pmatrix}
\bar{t}_1&0&t_1&0\\
0&\bar{t}_2&0&t_2\\
t_1&0&\bar{t}_1&0\\
0&t_2&0&\bar{t}_2
\end{pmatrix},
\end{equation}
with $t_j=e^{-\frac\pi 4i}+\ord (h)$ for $j=1,2$.

Summing up, we obtain from \eqref{aa}, \eqref{a-}, \eqref{origin}, \eqref{wpm}, \eqref{wfs} and \eqref{pmflat}
\begin{equation}
\label{matrixproduct}
\begin{pmatrix}
A^\inc \\ A^\out
\end{pmatrix}=\begin{pmatrix}
1 	&\alpha_{2,\inc}^\inc 	&0 	&\alpha_{2,\out}^\inc \\
\alpha_{1,\inc}^\inc 	&1 	&\alpha_{1,\out}^\inc 	&0 \\
0 	&\alpha_{2,\inc}^\out 	&1 	&\alpha_{2,\out}^\out \\
\alpha_{1,\inc}^\out 	&0 	&\alpha_{1,\out}^\out 	&1
\end{pmatrix}^{-1}\frac{1}{2}
\begin{pmatrix}
e^{\frac\pi 4i}&0&e^{-\frac\pi 4i}&0\\
0&e^{\frac\pi 4i}&0&e^{-\frac\pi 4i}\\
e^{-\frac\pi 4i}&0&e^{\frac\pi 4i}&0\\
0&e^{-\frac\pi 4i}&0&e^{\frac\pi 4i}
\end{pmatrix}
\begin{pmatrix}
1	&0	\\
0	&1	\\
0	&\alpha_2	\\
\alpha_1	&0	
\end{pmatrix},
\end{equation}
modulo $\ord(h^{\frac 2{2n+1}})$.
It follows from a direct computation of \eqref{matrixproduct} that
$$
A^\inc(E,h)=
\frac{e^{\frac\pi 4i}}2
\begin{pmatrix}
1 &-i(\alpha_2+\alpha_{2,\out}^\inc) -\alpha_{2,\inc}^\inc \\
-i(\alpha_1+\alpha_{1,\out}^\inc)-\alpha_{1,\inc}^\inc & 1 
\end{pmatrix}+\ord(h^{\frac 2{2n+1}}),
$$
$$
A^\out(E,h)=
\frac{e^{-\frac\pi 4i}}2
\begin{pmatrix}
1 &i(\alpha_2+\alpha_{2,\inc}^\out) -\alpha_{2,\out}^\out \\
i(\alpha_1+\alpha_{1,\inc}^\out)-\alpha_{1,\out}^\out & 1
\end{pmatrix}+\ord(h^{\frac 2{2n+1}}).
$$
Thus we obtain from the formula \eqref{T} that $T$ is given, modulo $\ord(h^{\frac 2{2n+1}})$, by
\be
\label{TEh}
%T(E,h)=
-i\begin{pmatrix}
1 & \alpha_{2,\inc}^\inc-\alpha_{2,\out}^\out+i(2\alpha_2+\alpha_{2,\out}^\inc+\alpha_{2,\inc}^\out) \\
\alpha_{1,\inc}^\inc-\alpha_{1,\out}^\out+i(2\alpha_1+\alpha_{1,\out}^\inc+\alpha_{1,\inc}^\out) & 1
\end{pmatrix}.
\ee

Recall that the function $r_0(x)$ is assumed to have a compact support near the origin.
\begin{lemma}\label{lem:Coefficients-B1}
The entries $t_{12}$ and $t_{21}$ of $T$ have the following asymptotics as $h\to0^+$:
\be\label{eq:AiAiIntegrals}
\begin{aligned}
&t_{12}=-\frac{2\pi}{h^{1/3}}\int_{\R}\frac{r_0(x)}{\sqrt{\xi_1'\xi_2'}}\ai(h^{-2/3}\xi_1(x))\ai(h^{-2/3}\xi_2(x))dx+\ord(h^{\frac{2}{2n +1}}),\\
&t_{21}=-\frac{2\pi}{h^{1/3}}\int_{\R}\frac{r_0(x)}{\sqrt{\xi_1'\xi_2'}}\ai(h^{-2/3}\xi_1(x))\ai(h^{-2/3}\xi_2(x))dx+\ord(h^{\frac{2}{2n +1}}),
\end{aligned}
\ee
where $\xi_j(x)$ is the function $\xi(x)$ defined by \eqref{eq:analytic-eta} for $V=V_j$.
\end{lemma}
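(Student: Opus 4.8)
\emph{Plan of proof.} Everything is reduced, exactly as in \cite{FMW1,AFH1}, to isolating the leading parts of the quantities $\alpha_j$ and $\alpha_{j,\dir'}^{\dir}$ of \eqref{al}--\eqref{be}, which through \eqref{TEh} already determine $t_{12}$ and $t_{21}$ modulo $\ord(h^{\frac2{2n+1}})$. First I would drop the infinite sums: writing $\alpha_{j,\dir'}^{\dir}=C_{\hat j}(u_{\hat j}^\dir,u_{\hat j}^{\hat\dir};-\infty)[u_j^{\dir'}]+C_{\hat j}(u_{\hat j}^\dir,u_{\hat j}^{\hat\dir};-\infty)[(J_j^{\rm a}-\Id)u_j^{\dir'}]$ and $\alpha_j=C_{\hat j}(u_{\hat j}^+,u_{\hat j}^-;+\infty)[u_j^-]+C_{\hat j}(u_{\hat j}^+,u_{\hat j}^-;+\infty)[(J_j^{\rm p}-\Id)u_j^-]$, the second term of each contains one further iterate of $\til K^{\rm a}$ (resp. $\til K^{\rm p}$) and is $\ord(h^{\frac2{2n+1}})$ (resp. $\ord(h^{\frac23})$) by the same oscillatory-integral bounds as in Lemma~\ref{estimate}. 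Next, splitting $U_1=r_0+ir_1hD_x$ and $U_2=U^*=r_0-ir_1hD_x$ modulo a zeroth-order $\ord(h)$ term, the $hD_x$-part carries an explicit power of $h$ and is absorbed into $\ord(h^{\frac2{2n+1}})$ as well. Hence, modulo $\ord(h^{\frac2{2n+1}})$, each $\alpha_{j,\dir'}^{\dir}$ becomes $\pm\big(h\W(u_{\hat j}^\inc,u_{\hat j}^\out)\big)^{-1}\int_{-\infty}^0 r_0\,u_{\hat j}^{\hat\dir}u_j^{\dir'}\,dx$ and $2\alpha_j$ becomes $2\big(h\W(u_{\hat j}^+,u_{\hat j}^-)\big)^{-1}\int_0^{+\infty}r_0\,u_1^-u_2^-\,dx$.

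Then I would substitute the scalar solutions of Appendix~\ref{Appendix B}: up to $\ord(h)$ amplitude corrections, $u_j^-$ and $u_j^+$ are $(\xi_j')^{-1/2}$ times $\ai(h^{-\frac23}\xi_j)$ and $\bi(h^{-\frac23}\xi_j)$, while $u_j^\inc,u_j^\out$ are the combinations of $u_j^-,u_j^+$ prescribed by \eqref{pmflat} with $t_j=e^{-i\pi/4}+\ord(h)$. Feeding \eqref{pmflat} into the four integrals $\int_{-\infty}^0 r_0\,u_1^{a}u_2^{b}\,dx$ ($a,b\in\{\inc,\out\}$) that enter the $(1,2)$-entry of \eqref{TEh}, and using the constancy of $\W(u_1^\inc,u_1^\out)$, the $e^{\pm i\pi/4}$ phases organise so that the mixed $\ai\bi$- and the pure $\bi\bi$-type contributions either cancel outright or, after the normalisation by $h^{-1/3}$, are $\ord(h^{\frac2{2n+1}})$; this is where Proposition~\ref{prop:key-int} enters, its estimate separating the ``resonant'' oscillatory integral (phase with an order-$n$ degenerate critical point at $x=0$, which produces the leading term) from the ``non-resonant'' one (phase with a non-degenerate critical point, which is smaller). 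What remains is a multiple of $\int_{-\infty}^0 \tfrac{r_0}{\sqrt{\xi_1'\xi_2'}}\ai(h^{-\frac23}\xi_1)\ai(h^{-\frac23}\xi_2)\,dx$, which glues with the $\int_0^{+\infty}$-term coming from $2\alpha_j$ into the full-line integral of \eqref{eq:AiAiIntegrals}; the prefactor $-2\pi/h^{1/3}$ is then pinned down from the Airy Wronskian $\W(\ai,\bi)=\tfrac1\pi$, which fixes $\W(u_{\hat j}^+,u_{\hat j}^-)$ and $\W(u_{\hat j}^\inc,u_{\hat j}^\out)$.

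The entry $t_{21}$ is handled identically with $1\leftrightarrow2$ interchanged: since $U_1$ and $U_2=U^*$ share the principal part $r_0$, since $r_0$ is a fixed function, and since $\ai(h^{-\frac23}\xi_1)\ai(h^{-\frac23}\xi_2)$ is symmetric in the two indices, one lands on the same integral, so that $t_{12}$ and $t_{21}$ have the stated common asymptotics. I expect the real obstacle to be the error analysis underlying the second paragraph, namely verifying that \emph{all} the $\ai\bi$/$\bi\bi$-type ``non-resonant'' pieces, together with the $hD_x$-part of $U_j$ and the discarded higher iterates of $J_j^{\rm a},J_j^{\rm p}$, are genuinely $\ord(h^{\frac2{2n+1}})$ — a margin that is only one power of $h^{\frac1{2n+1}}$ below the leading term, and for $n=1,2$ not comfortably below it. This is precisely where the degenerate-stationary-phase analysis of Proposition~\ref{prop:key-int} — after the change of variable $x=z^2$ that removes the turning-point singularity of the amplitude and turns the phase into $z^{2n+1}$ — becomes indispensable.
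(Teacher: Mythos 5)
Your proposal follows essentially the same route as the paper's proof: reduce via \eqref{TEh} to the quantities $\alpha_j$ and $\alpha_{j,\dir'}^{\dir}$, discard the higher iterates of $J^{\rm a},J^{\rm p}$ and the $hD_x$-part of $U_j$ within the $\ord(h^{\frac2{2n+1}})$ error using Lemma~\ref{estimate}, substitute the Airy-type representations of Appendix~\ref{Appendix B}, and recombine so that the left half-line contributions and $2\alpha_j$ glue into the full-line $\Ai\cdot\Ai$ integral with prefactor $-2\pi h^{-1/3}$ (the paper performs this recombination in the $\Ci,\Ci^*$ basis via the identity $e^{-i\pi/4}\Ci+e^{i\pi/4}\Ci^*=2\Ai$ rather than through \eqref{pmflat}, which is only a change of basis). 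The one minor misattribution is that the resonant/non-resonant phase separation you invoke is not the content of Proposition~\ref{prop:key-int} (which gives the asymptotics of the full-line $\Ai\cdot\Ai$ integral and is applied after this lemma) but of Lemma~\ref{lem:red-wkb} inside its proof; this does not affect the substance of your argument.
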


\begin{proof}
According to \eqref{TEh}, it is enough to compute $\alpha_j$ and $\alpha_{j,\dir'}^\dir$. 
For example we have
\begin{align*}
\alpha_1&= C_2(u_2^+,u_2^-;+\infty) [u_1^-]+\ord(h^{\frac{2}{2n+1}}) \\
&=\frac 1{h{\mathcal W}(u_2^+,u_2^-)}\int_0^\infty u_2^-(x)U_2u_1^-(x)dx+\ord(h^{\frac{2}{2n+1}}) \\
&=-\frac \pi{h^{\frac13}}\int_0^\infty \frac{r_0(x)}{\sqrt{\xi_1'(x)\xi_2'(x)}}\ai(h^{-2/3}\xi_1(x))\ai(h^{-2/3}\xi_2(x))dx+\ord(h^{\frac{2}{2n+1}}),
\end{align*}
where the last identity follows from the formula \eqref{uai} and the Wronskian formula before it. Similarly we have
\begin{align*}
&\alpha_2 =-\frac{\pi}{h^{\frac 13}}\int_0^\infty
\frac{r_0}{\sqrt{\xi_1'\xi_2'}}\ai(h^{-\frac 23}\xi_1)\ai(h^{-\frac 23}\xi_2)dx+\ord(h^{\frac 2{2n+1}}),\\
&\alpha_{j,\inc}^\out=\frac{i\pi}{2h^{\frac 13}}\int_{-\infty}^0
\frac{r_0}{\sqrt{\xi_1'\xi_2'}}\Ci^*(h^{-\frac 23}\xi_1)\Ci^*(h^{-\frac 23}\xi_2)dx+\ord(h^{\frac 2{2n+1}}),\\
&\alpha_{j,\out}^\inc=-\frac{i\pi}{2h^{\frac 13}}\int_{-\infty}^0
\frac{r_0}{\sqrt{\xi_1'\xi_2'}}\Ci(h^{-\frac 23}\xi_1)\Ci(h^{-\frac 23}\xi_2)dx+\ord(h^{\frac 2{2n+1}}), \\
&\alpha_{j,\out}^\out
=\frac{i\pi}{2h^{\frac 13}}\int_{-\infty}^0
\frac{r_0}{\sqrt{\xi_1'\xi_2'}}\Ci(h^{-2/3}\xi_j)\Ci^*(h^{-\frac 23}\xi_{\hat j})dx+\ord(h^{\frac 2{2n+1}}),\\
&\alpha_{j,\inc}^\inc
=-\frac{i\pi}{2h^{\frac 13}}\int_{-\infty}^0
\frac{r_0}{\sqrt{\xi_1'\xi_2'}}\Ci(h^{-\frac 23}\xi_{\hat j})\Ci^*(h^{-\frac 23}\xi_j)dx+\ord(h^{\frac 2{2n+1}}),
\end{align*}
for $\{j,\hat j\}=\{1,2\}$. 
Then we obtain \eqref{eq:AiAiIntegrals} from the identity
$
e^{-i\pi/4}\Ci+e^{i\pi/4}\Ci^*=2\Ai.
$
\end{proof}

The above lemma reduces our problem to the integral
\be
\label{IEh}
\mc{I}=\mc{I}(E,h):=-\frac{2\pi}{h^{\frac 13}}\int_{\R}\frac{r_0(x)}{\sqrt{\xi_1'(x)\xi_2'(x)}}\ai(h^{-\frac 23}\xi_1(x))\ai(h^{-\frac 23}\xi_2(x))dx.
\ee

\begin{proposition}\label{prop:key-int}
Let $|\re E|\le Lh^{\frac2{2n+1}}$ with a positive $L$.
One has 
\be\label{eq:key-int1}
\mc{I}(E,h)=-\frac{2\pi r_0(0)}{\sqrt{v_0}}\left (\frac{h}{q_n}\right )^{\frac 1{2n+1}} A_n\left(-\left(\frac{q_n}h\right)^{\frac2{2n+1}}\frac E{v_0}\right)+\ord(h^{\frac{2-\epsilon}{2n+1}}),
%r(h),
%\ord(h^{\frac{2}{2n+1}-\frac3{(2n+1)(4n+3)}}).
\ee
with $v_0:=\sqrt{V_1'(0) V_2'(0)}$, $0\le\epsilon<1$ is given in Theorem~\ref{MAINTH} and the function $A_n$ is defined by \eqref{defAn}.
%, $\epsilon=\frac3{4n+3}<1$.
\end{proposition}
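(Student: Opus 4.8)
The plan is to treat $\mc I(E,h)$ as an oscillatory integral whose phase has, at the crossing--turning point $x=0$, a critical point which degenerates to order $n$ and which, as $E\to0$, coalesces with the two turning points $a_1(E),a_2(E)$ of $V_1,V_2$; the generalized Airy function $A_n$ is precisely the uniform model for this confluence.

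\textbf{Step 1 (localization).} First I would reduce $\mc I$, modulo $\ord(h^{\infty})$, to an integral over a fixed small neighborhood of $x=0$. Away from $x=0$, either $\xi_1(x)$ or $\xi_2(x)$ is positive and bounded below, so the corresponding factor $\ai(h^{-2/3}\xi_j(x))$ is exponentially small; or both $\xi_j(x)$ are negative, in which case the oscillatory asymptotics $\ai(-s)=\pi^{-1/2}s^{-1/4}\cos(\tfrac23 s^{3/2}-\tfrac\pi4)+\ord(s^{-7/4})$ turn the product of the two Airy factors into a superposition of WKB terms with phases $\pm(\phi_1^\sharp\pm\phi_2^\sharp)/h$; of these, only $\pm(\phi_1^\sharp-\phi_2^\sharp)/h$ can be stationary (at $x=0$), and since $V_2<V_1$ for $x<0$ near $0$ by \eqref{contact-o}, its derivative $\sqrt{E-V_2}-\sqrt{E-V_1}$ stays bounded away from $0$ there, so a non-stationary-phase integration by parts disposes of that contribution. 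As $r_0$ has compact support, I may henceforth take it supported in an arbitrarily small neighborhood of $0$.

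\textbf{Step 2 (reduction to a generalized Airy integral).} On this neighborhood I apply the same asymptotic substitution to the two Airy factors, retaining the honest $\ai$'s on the $\ord(h^{2/3})$-wide $x$-region around the (merged) turning points where the substitution fails (there the $\cos$-amplitude below is singular but the honest integrand is bounded), whose contribution to $\mc I$ is only $\ord(h^{1/3})$ and is therefore absorbed in the error term once $n\ge2$; the case $n=1$ is the classical Landau--Zener situation, treated by keeping the honest Airy functions throughout, so that $\mc I$ is there, to leading order, literally $A_1=\ai$. Using $(\xi_j')^{1/2}|\xi_j|^{1/4}=(E-V_j)^{1/4}$ and discarding the non-stationary phase $\phi_1^\sharp+\phi_2^\sharp$, this brings $\mc I$ to an integral of the form
$$
\mc I=-\int \frac{r_0(x)}{\bigl[(E-V_1(x))(E-V_2(x))\bigr]^{1/4}}\,\cos\!\left(\frac{\Phi(x;E)}{h}\right)dx+(\text{lower order}),\qquad \Phi:=\phi_1^\sharp-\phi_2^\sharp,
$$
whose phase satisfies $\partial_x\Phi=\dfrac{V_1-V_2}{\sqrt{E-V_1}+\sqrt{E-V_2}}$ and hence has a critical point of exact order $n$ at $x=0$ (this is where $V_1-V_2$ vanishing to order $n$, together with the turning points being at distance $\ord(E)$ and $a_1-a_2=\ord(E^n)$, enters). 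The decisive step is a change of variable $x\mapsto z$ of the type $x-a(E)\sim(\text{scale})\,z^2$ — with $a(E)$ the common turning point to leading order, and the substitution $x=z^2$ indicated in the introduction as the model — chosen so that, in one stroke, the confluent turning-point/crossing singularity of the amplitude at $x=0$ becomes a regular point and $\Phi$ becomes $z^{2n+1}$-like. After rescaling $z$ by $h^{1/(2n+1)}$ and Taylor-expanding the remaining smooth data at the origin, one is left with
$$
\mc I=-\frac{2\pi r_0(0)}{\sqrt{v_0}}\Bigl(\tfrac{h}{q_n}\Bigr)^{\frac1{2n+1}}\cdot\frac1{2\pi}\int_{\R}\exp\!\Bigl(i\!\int_0^\eta(y+\tau^2)^n\,d\tau\Bigr)d\eta+\ord(h^{\frac{2-\epsilon}{2n+1}}),\qquad y:=-\Bigl(\tfrac{q_n}{h}\Bigr)^{\frac2{2n+1}}\tfrac{E}{v_0},
$$
the shift by $y$ arising from $E$ sitting inside $\sqrt{E-V_j}$ (equivalently, from $a(E)$ lying at distance $\ord(E)$ from $x=0$) and the exponents $\tfrac2{2n+1},\tfrac1{2n+1}$ being forced by balancing these scales against $h$. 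Recognising the inner integral as $2\pi A_n(y)$ via \eqref{defAn} yields \eqref{eq:key-int1}, once one checks by collecting all Jacobians and leading Taylor coefficients that the constants assemble into $v_0=\sqrt{V_1'(0)V_2'(0)}$ and $q_n$ of \eqref{qn}.

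\textbf{Step 3 (errors, and the main obstacle).} The remainder $\ord(h^{(2-\epsilon)/(2n+1)})$ is obtained by weighing the largest of: the next Taylor coefficients of $V_1,V_2,r_0,\xi_j$ at $0$; the correction terms of the stationary-phase and Airy expansions; and the discrepancy between $\Phi$ and its model $\int_0^\eta(y+\tau^2)^nd\tau$ — uniformly for $E\in\cR(Lh^{2/(2n+1)},Lh)$; this gives $\epsilon=0$ for $n=1$ (the model is exact), $\epsilon=\tfrac12$ for $n=2$, and $\epsilon=\tfrac3{4n+3}$ for $n\ge3$. I expect the main difficulty to be precisely this uniform degenerate stationary phase: the window $|\re E|\le Lh^{2/(2n+1)}$ is the one in which the critical point of $\Phi$ coalesces, as $E\to0$, with both turning points $a_1(E),a_2(E)$ and with $x=0$, the three collapsing at the single scale $h^{2/(2n+1)}$, and for $\re E<0$ the real critical point is replaced by a complex one whose contribution is $e^{-c|E|^{n+1/2}/h}\sim e^{-c|y|^{n+1/2}}$ rather than an oscillation — this being exactly the transition encoded by the exponential decay of $A_n$ at $+\infty$ versus its oscillation at $-\infty$ (and the small imaginary part $\im E=\ord(h)$ must be carried along the complexified contour). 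The desingularizing substitution of Step~2 must therefore be built so as to absorb this three-fold confluence at once and to stay a diffeomorphism with controlled derivatives across the whole window, and pinning down the precise constant $q_n$ in the argument of $A_n$ is part of the same computation. A minor technical point is the justification of the manipulations with the only conditionally convergent Airy-type integrals, handled as usual by contour deformation.
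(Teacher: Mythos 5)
Your proposal follows essentially the same route as the paper: localize near the merged turning/crossing point, replace the Airy factors by their oscillatory asymptotics keeping only the stationary difference phase $\phi_1^\sharp-\phi_2^\sharp$, change variables $z=\sqrt{a(E)-x}$ and rescale by $h^{1/(2n+1)}$ to recognize $A_n$, with $n=1$ treated separately via the honest Airy functions, and the error sources you list are the ones the paper quantifies. The only real difference is in presentation: the paper's error bookkeeping uses two explicit cutoff scales $h^{2\mu},h^{2\nu}$ and a direct comparison with the model phase, so the uniform complex stationary-phase/contour-deformation analysis you flag as the main obstacle is never needed, since the definition of $A_n$ already encodes the transition for all $E$ in the window.
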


%\begin{remark}
%In case $n=1$, we have a better error estimate $r(h)=\ord(h^{\frac23})$. 
%\end{remark}

%\begin{remark}
%The function $A_n$ is an analytic function near the real line, and is a generalization of the Airy function in the sense that we have $A_1(-\lambda)=\rm{Ai}(\lambda)$. 
%Note that $h^{\frac4{2n+1}-\frac1{2n(2n+1)}}< h$ (for $n\ge2$) and $h^{1/3}< h^{2/(2n+1)}$ (for $n\ge3$) if $0<h<1$.
%\end{remark}
The function $\mathcal{I}$ is analytic with respect to $E$ in a small complex neighborhood of $E=0$ according to the properties of the functions $\xi_1, \xi_2$ (see Appendix \ref{Appendix B}). For simplicity, in the following we work with real energies $E\in [-L h^{\frac{2}{2n+1}}, L h^{\frac{2}{2n+1}}]$ with an arbitrary positive $L$, but the same estimates hold for complex $E\in \mathcal{R}$.

We use different methods to obtain the principal term of this integral for the cases $n=1$ and $n\ge2$. The case $n=1$ is similar to \cite{FMW1}, and we send it to Lemma~\ref{lem:AiAi-}. 
More precisely, we apply this lemma after the reduction
\ben
\mc{I}(E,h)=-\frac{2\pi r_0(0)+\ord(h^{1/3})}{(v_0h)^{\frac13}}\int_\R 
\prod_{j=1,2}\ai\left(h^{-\frac23}(V_j'(0))^{\frac13}(x-a_j(E))\right)dx,
%\ai\left(h^{-\frac23}(V_2'(0))^{\frac13}(x-a_2(E))\right)dx
\een
where $a_j(E)$ is the unique root of $V_j(x)=E$ near $x=0$. The above reduction follows from $\xi_j(x)=(V_j'(0))^{\frac13}(x-a_j(E))+\ord(|x-a_j(E)|^2)$,
\ben
\int_{\R}\frac{(1-\chi(x/\sqrt h))r_0(x)}{\sqrt{\xi_1'(x)\xi_2'(x)}}\prod_{j=1,2}\ai(h^{-\frac23}\xi_j(x))dx=\ord(h^\infty),
\een
and
\ben
\int_\R\left(1-\chi(x/\sqrt h)\right)\prod_{j=1,2}\ai\left(h^{-\frac23}(V_j'(0))^{\frac13}(x-a_j(E))\right)dx=\ord(h^{\frac23}),
\een
with $\chi\in C_0^\infty(\R;[0,1])$ satisfying $\chi(x)=1$ near $x=0$.

Let us consider the case with $n\ge2$. 
Proposition~\ref{prop:key-int} is shown by using the following two lemmas.
Let $C>0$ be an $h$-independent large constant, and $\chi_\nu$ a smooth function such that
\be\label{chi-nu}
\chi_\nu(y)=\left\{
\begin{aligned}
&1\qtext{for}%-Ch^{\frac2{2n+1}-\e_1}\le 
y\ge2Ch^{2\nu}\\
&0\qtext{for}%y\le-2Ch^{\frac2{2n+1}-\e_1},\  
y\le Ch^{2\nu}.
\end{aligned}
\right.
\ee

\begin{lemma}\label{lem:red-wkb}
Let $a(E)=\min (a_1(E), a_2(E))$. For $0<\nu\le\frac 13$, one has
\be\label{eq:Int-pm}
\mc{I}(E,h)=-\sum_\pm\int_{-\infty }^{a(E)} \chi_\nu(a(E)-x)\sigma(x,E)e^{\pm i\phi(x,E)/h}dx+\ord(h^{\nu}),
\ee
where, $\phi$ and $\sigma$ are given by
\ben
\begin{aligned}
&\phi(x,E)=\int_{a_1(E)}^x\sqrt{E-V_1(t)}dt-\int_{a_2(E)}^x\sqrt{E-V_2(t)}dt,\\
&\sigma(x,E)=\frac12r_0(x){\left [(E-V_1(x))(E-V_2(x))\right ]^{-\frac 14}}.
\end{aligned}
\een
\end{lemma}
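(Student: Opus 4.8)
\textbf{Proof plan for Lemma~\ref{lem:red-wkb}.}
The idea is to replace each Airy factor in the integral $\mc I(E,h)$ of \eqref{IEh} by its large--argument asymptotics in the classically allowed region $\{\xi_j(x)<0\}$, and to cut off a shrinking neighborhood of the turning point $x=a(E)$. Recall from Appendix~\ref{Appendix B} (equation \eqref{eq:analytic-eta} with $V=V_j$) that on $\{x<a_j(E)\}$ one has $\xi_j(x)<0$ and
\ben
\tfrac 23|\xi_j(x)|^{3/2}=\int_x^{a_j(E)}\sqrt{E-V_j(t)}\,dt=-\phi_j^\flat(x),\qquad
\xi_j'(x)\,|\xi_j(x)|^{1/2}=\sqrt{E-V_j(x)}\,.
\een
The second relation gives $|\xi_j(x)|^{-1/4}\xi_j'(x)^{-1/2}=(E-V_j(x))^{-1/4}$, hence $\tfrac{r_0(x)}{\sqrt{\xi_1'\xi_2'}}|\xi_1(x)|^{-1/4}|\xi_2(x)|^{-1/4}=2\sigma(x,E)$; the first relation gives $\tfrac 23(|\xi_1|^{3/2}-|\xi_2|^{3/2})=\phi_2^\flat-\phi_1^\flat=-\phi(x,E)$. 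So the phases produced by the Airy asymptotics are exactly $\pm\phi/h$, together with a rapidly oscillating "sum phase" $\mp(\phi_1^\flat+\phi_2^\flat)/h$.

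First I would split $\mc I=\int \chi_\nu(a(E)-x)(\cdots)\,dx+\int(1-\chi_\nu)(a(E)-x)(\cdots)\,dx$. For the piece carrying $1-\chi_\nu$, supported in $\{x\ge a(E)-2Ch^{2\nu}\}\cap\operatorname{supp}r_0$, I use only the uniform bound $|\ai(t)|\le C\langle t\rangle^{-1/4}$ on $\R$, together with $\xi_j(x)=(V_j'(0))^{1/3}(x-a_j(E))+\ord(|x-a_j(E)|^2)$ and $a_1(E)-a_2(E)=\ord(h^{2n/(2n+1)})=o(h^{2/3})$ (which follows from \eqref{contact-o}, since $n\ge2$ forces $V_1'(0)=V_2'(0)$ and $|E|\le Lh^{2/(2n+1)}$). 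The substitution $x=a(E)+h^{2/3}y$ then bounds this piece by $C\,h^{1/3}\!\int_{|y|\le C'h^{2\nu-2/3}}\langle y\rangle^{-1/2}\,dy=\ord(h^{1/3}\cdot h^{\nu-1/3})=\ord(h^\nu)$, the part $x\ge a(E)+Ch^{2\nu}$ being $\ord(h^\infty)$ by the exponential decay of $\ai$ on $[0,+\infty)$. This step also disposes of the region $x>a(E)$, which is absent on the right--hand side of \eqref{eq:Int-pm}.

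On the remaining region $\{\chi_\nu=1\}=\{a(E)-x\ge 2Ch^{2\nu}\}$ one has $|\xi_j(x)|\ge c(a(E)-x)\ge c'h^{2\nu}$, hence $h^{-2/3}|\xi_j(x)|\ge c''h^{2\nu-2/3}\to+\infty$, and the two--term expansion $\ai(-s)=\pi^{-1/2}s^{-1/4}\cos(\tfrac 23 s^{3/2}-\tfrac\pi4)+\ord(s^{-7/4})$ applies with $s=h^{-2/3}|\xi_j(x)|$. Multiplying the two expansions, using $\cos A\cos B=\tfrac12[\cos(A-B)+\cos(A+B)]$ and the three identities above, the leading product reproduces precisely $-\sum_\pm\chi_\nu\sigma\,e^{\pm i\phi/h}$ plus a term of the form $\chi_\nu\sigma\,e^{\mp i(\phi_1^\flat+\phi_2^\flat)/h}$. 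After multiplication by $2\pi h^{-1/3}r_0/\sqrt{\xi_1'\xi_2'}$, the error from the $\ord(s^{-7/4})$ remainders is bounded by $C\,\sigma(x,E)\,h\,(a(E)-x)^{-3/2}\le C\,h\,(a(E)-x)^{-2}$ on $\{x<a(E)\}$, which integrates over $\{a(E)-x\ge 2Ch^{2\nu}\}$ to $\ord(h^{1-2\nu})$. For the sum--phase term I would integrate by parts once, using that $|(\phi_1^\flat+\phi_2^\flat)'(x)|=\sqrt{E-V_1}+\sqrt{E-V_2}\ge c\sqrt{a(E)-x}$ has no zero on $\{x<a(E)\}$; there are no boundary terms because $\chi_\nu\sigma$ is compactly supported, and the new amplitude is $\ord(h(a(E)-x)^{-2})$ on the bulk and $\ord(h^{1-4\nu})$ on the $\ord(h^{2\nu})$--wide layer where $\chi_\nu$ varies, so this term is again $\ord(h^{1-2\nu})$. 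Since $0<\nu\le\tfrac13$ gives $1-2\nu\ge\tfrac13\ge\nu$, all errors are $\ord(h^\nu)$, which proves \eqref{eq:Int-pm}.

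\emph{Main obstacle.} The delicate point is the uniform control across the turning point $x=a(E)$: there the Airy asymptotics degenerate while the WKB amplitude $\sigma(x,E)\sim(a(E)-x)^{-1/2}$ blows up, and the confluence $a_1(E)-a_2(E)=\ord(h^{2n/(2n+1)})$ prevents separating the two turning points. The cutoff at scale $h^{2\nu}$ with $\nu\le\tfrac13$ is exactly the balance point: the crude estimate inside the $\ord(h^{2\nu})$--layer and the remainder of the two--term Airy expansion outside it produce errors $\ord(h^\nu)$ and $\ord(h^{1-2\nu})$, which coincide at $\nu=\tfrac13$, while the discarded rapidly oscillating sum--phase contribution is killed by a single non--stationary integration by parts.
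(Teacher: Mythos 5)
Your proposal is correct and follows essentially the same route as the paper: a cutoff at scale $h^{2\nu}$ around $a(E)$ handled by crude Airy bounds (giving $\ord(h^\nu)$), then the $y\to-\infty$ Airy asymptotics on the support of $\chi_\nu$, with the identity $\xi_j(\xi_j')^2=V_j-E$ turning the amplitude into $\sigma$, the difference phase into $\pm\phi/h$, and the non-stationary sum-phase term discarded (you make the integration by parts explicit, with the same $\ord(h^{1-2\nu})$ bound). The only cosmetic imprecision is that at the endpoint $\nu=\tfrac13$ the claims ``$h^{-2/3}|\xi_j|\to+\infty$'' and ``$\ord(h^\infty)$ for $x\ge a+Ch^{2\nu}$'' should be read as ``bounded below by the large constant $C$'' and ``$\ord(h^{1/3})$'', which does not affect the $\ord(h^\nu)$ conclusion.
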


\begin{proof}

Note that the absolute value of the integrand of $\mc{I}$ (including the pre-factor $h^{-\frac13}$) is $\ord(h^{-1/3})$ in the $\ord(h^{\frac23})$-neighborhood of $x=a$, 
decays exponentially for $x>a+Ch^{\frac23}$, and is bounded from above by $C(a-x)^{-\frac12}$ for $Ch^{\frac23}\le a-x\le 2Ch^{2\nu}$. 
These facts imply the estimate
\be\label{eq:cutoff-estimate}
\frac{2\pi}{h^{\frac 13}}\int_{\R}(1-\chi_\nu(a-x))\frac{r_0(x)}{\sqrt{\xi_1'\xi_2'}}\ai(h^{-\frac 23}\xi_1(x))\ai(h^{-\frac 23}\xi_2(x))dx
=\ord(h^{\nu}).
\ee

On the support of $\chi_\nu(a-x)$, we have $-h^{-\frac23}\xi_j(x)\gg1$, and
hence using the asymptotic formula \eqref{ai-} of the Airy function near $-\infty$, we have
\ben
\Ai(h^{-\frac23}\xi_1)\Ai(h^{-\frac23}\xi_2)=\frac{h^{\frac13}}{\pi\sqrt[4]{\xi_1\xi_2}}\prod_{j=1,2}\left(\sin\left(\frac{2(-\xi_j)^{\frac32}}{3h}+\frac\pi4\right)+\ord\left(\frac{h}{(-\xi_j)^{\frac32}}\right)\right).
\een
This is of WKB form when we rewrite the sine functions with exponential functions, and the phase functions are of the form (see \eqref{eq:analytic-eta})
\ben
\pm\frac 23 (-\xi_1(x))^{3/2}\pm\frac 23(-\xi_2(x))^{3/2}=
\pm\int_{a_1}^x\sqrt{E-V_1(t)}dt\pm\int_{a_2}^x\sqrt{E-V_2(t)}dt.
\een
The derivative of the phase is 
$\pm(\sqrt{E-V_1(t)}-\sqrt{E-V_2(t)})$ or $\pm(\sqrt{E-V_1(t)}+\sqrt{E-V_2(t)})$ and the first one has a zero (critical point) at the crossing point $x=0$, while the second one does not. Hence it is enough to compute the first one.
Using the identity \eqref{xi} for the symbol, we have
\ben
\frac{2\pi}{h^{\frac 13}}\int_{\R}\frac{\chi_\nu(a-x)r_0}{\sqrt{\xi_1'\xi_2'}}\ai(h^{-2/3}\xi_1)\ai(h^{-2/3}\xi_2)dx
= - \sum_\pm\int_{-\infty}^{a} \chi_\nu(a-x)\sigma e^{\pm i\phi/h}dx+\ord(h^{\frac13}).
\een
The Lemma follows from this with \eqref{eq:cutoff-estimate}.
\end{proof}

The turning points $a_1(E)$, $a_2(E)$ and their difference behave as $E\to 0$ like
\be\label{eq:asy-aj}
a_j(E)=\frac{E}{V_j'(0)}+\ord(E^2),
\quad
a_2(E)-a_1(E)
%=\frac{V_1^{(n)}(0)-V_2^{(n)}(0)}{v_0^{n+1}n!}E^{n}+\ord(E^{n+1})
=\frac{q_n}{\sqrt{v_0}}\left(\frac E{v_0}\right)^{n}+\ord(E^{n+1}),
\ee
where $q_n$ is the non-zero constant defined by \eqref{qn}.

%To see the asymptotic behavior of the integral \eqref{eq:Int-pm}, we use the following lemma.
\begin{lemma}\label{lem:key-asym1}
As $E\to 0$ and $a(E)-x\to 0^+$,  we have
\be
\label{phi'}
\phi'(x,E)=\frac{-q_n x^n+\ord\left(\left(\left|E\right|
+\left|a-x\right|\right)^{n+1}\right)}{\sqrt{a_1-x}+\sqrt{a_2-x}},
\ee
\be\label{eq:phi-wkb}
\phi(x,E)
=q_n\int_0^{\sqrt{a-x}}\left(a-\tau^2\right)^n d\tau+\ord\left(|E|^{\frac{3n}2}+(|E|+|a-x|)^{n+\frac32}+|E|^{2n}|a-x|^{-\frac12}\right).
\ee
\end{lemma}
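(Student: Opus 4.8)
The plan is to deduce \eqref{phi'} from the Taylor expansion of $V_1-V_2$ at the crossing point together with the usual factorization of $E-V_j$ near a simple turning point, and then to obtain \eqref{eq:phi-wkb} by integrating \eqref{phi'} from $x$ to $a(E)$ and matching the outcome with $q_n\int_0^{\sqrt{a-x}}(a-\tau^2)^n\,d\tau$ via the substitution $t=a-\tau^2$.

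For \eqref{phi'}: since $n\ge 2$, condition \eqref{contact-o} forces $V_1'(0)=V_2'(0)$, hence $v_0=\sqrt{V_1'(0)V_2'(0)}=V_1'(0)=V_2'(0)$. For $x<a(E)$ rationalization gives $\phi'(x,E)=\sqrt{E-V_1(x)}-\sqrt{E-V_2(x)}=\dfrac{V_2(x)-V_1(x)}{\sqrt{E-V_1(x)}+\sqrt{E-V_2(x)}}$, where by \eqref{contact-o} and \eqref{qn} the numerator equals $-q_n\sqrt{v_0}\,x^n+\ord(x^{n+1})$. Writing $E-V_j(x)=V_j(a_j(E))-V_j(x)=(a_j(E)-x)\int_0^1V_j'\bigl(x+\theta(a_j(E)-x)\bigr)\,d\theta$ and using $a_j(E)=\ord(E)$ and $|x|\le|a(E)|+|a(E)-x|$, the last integral equals $v_0\bigl(1+\ord(|E|+|a-x|)\bigr)$, so $\sqrt{E-V_j(x)}=\sqrt{v_0}\,\sqrt{a_j(E)-x}\,\bigl(1+\ord(|E|+|a-x|)\bigr)$ and thus $\sqrt{E-V_1(x)}+\sqrt{E-V_2(x)}=\sqrt{v_0}\bigl(\sqrt{a_1-x}+\sqrt{a_2-x}\bigr)\bigl(1+\ord(|E|+|a-x|)\bigr)$. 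The factors $\sqrt{v_0}$ cancel, and absorbing $x^{n+1}$ and $(|E|+|a-x|)\,x^n$ into $\ord((|E|+|a-x|)^{n+1})$ by $|x|=\ord(|E|+|a-x|)$ yields \eqref{phi'}, as well as the sharper multiplicative form $\phi'(x,E)=\dfrac{-q_n x^n}{\sqrt{a_1-x}+\sqrt{a_2-x}}\bigl(1+\ord(|E|+|a-x|)\bigr)$, which I use below.

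For \eqref{eq:phi-wkb}: I would write $\phi(x,E)=c(E)+\int_a^x\phi'(t,E)\,dt$ with $c(E):=\phi(a(E),E)$, which equals, up to sign, $\int_{a_1(E)}^{a_2(E)}\sqrt{E-V_k(t)}\,dt$ for the index $k$ with $a_k(E)\ne a(E)$, hence $c(E)=\ord(|a_1-a_2|^{3/2})=\ord(|E|^{3n/2})$ by \eqref{eq:asy-aj}. Inserting the multiplicative form of \eqref{phi'} into $-\int_x^a\phi'$, its error factor contributes $\ord\bigl(\int_x^a\frac{|t|^n(|E|+|a-t|)}{\sqrt{a-t}}\,dt\bigr)=\ord\bigl((|E|+|a-x|)^{n+3/2}\bigr)$ (bound $\sqrt{a_1-t}+\sqrt{a_2-t}$ below by $\sqrt{a-t}$ and substitute $t=a-s^2$), so it remains to compare the principal integral $q_n\int_x^a\frac{t^n}{\sqrt{a_1-t}+\sqrt{a_2-t}}\,dt$ with $q_n\int_x^a\frac{t^n}{2\sqrt{a-t}}\,dt=q_n\int_0^{\sqrt{a-x}}(a-\tau^2)^n\,d\tau$. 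Taking $a=a_1\le a_2$ without loss of generality (the other case follows by exchanging the two indices, which sends $\phi\mapsto-\phi$ and $q_n\mapsto-q_n$) and setting $\mu:=a_2-a_1=\ord(|E|^n)$, the elementary identity $\frac1{\sqrt{a-t}+\sqrt{a-t+\mu}}-\frac1{2\sqrt{a-t}}=\frac{-\mu}{2\sqrt{a-t}\,(\sqrt{a-t}+\sqrt{a-t+\mu})^2}$ reduces the discrepancy to $-\frac{q_n\mu}{2}\int_0^{a-x}\frac{(a-s)^n}{\sqrt s\,(\sqrt s+\sqrt{s+\mu})^2}\,ds$ after $s=a-t$; expanding $(a-s)^n$ around $s=0$ and evaluating (via $s=u^2$), the constant term $a^n$ produces a contribution of size $\ord(\mu^{1/2}|a|^n)=\ord(|E|^{3n/2})$ coming from the endpoint $s=0$ and $\ord(\mu|a|^n(a-x)^{-1/2})=\ord(|E|^{2n}|a-x|^{-1/2})$ coming from the endpoint $s=a-x$, while the terms of degree $k\ge1$ in $s$ are dominated by $\ord\bigl((|E|+|a-x|)^{n+3/2}+|E|^{2n}|a-x|^{-1/2}\bigr)$. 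Collecting all contributions gives \eqref{eq:phi-wkb}.

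The main obstacle, I expect, is precisely this last discrepancy estimate: the weight $(a-t)^{-1/2}$ is singular at the turning point $t=a$, and one must simultaneously track the three competing scales $|E|$, $|a-x|$, and the turning-point separation $|a_1-a_2|\sim|E|^n$; it is the interplay of the upper endpoint $t=a-x$ with this small separation that generates the non-smooth error term $|E|^{2n}|a-x|^{-1/2}$, which is harmless in the regime $|a-x|\gtrsim|E|^n$ that is relevant in Lemma~\ref{lem:red-wkb}. Everything else is routine Taylor expansion and one-dimensional integral estimates.
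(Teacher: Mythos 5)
Your proof is correct and follows essentially the same route as the paper: \eqref{phi'} by rationalizing and Taylor-expanding numerator and denominator (using that $V_1'(0)=V_2'(0)=v_0$ for $n\ge2$), and \eqref{eq:phi-wkb} by splitting off the boundary action $\phi(a(E),E)=\ord(|E|^{3n/2})$, inserting \eqref{phi'}, and comparing $(\sqrt{a_1-t}+\sqrt{a_2-t})^{-1}$ with $(2\sqrt{\,\cdot\,-t})^{-1}$ after the substitution $\tau=\sqrt{a-t}$. The only cosmetic difference is that you expand around the smaller turning point $a$ via an exact difference identity (with a singular but integrable weight), whereas the paper expands around the other turning point and then absorbs $\int_0^{\sqrt{a_1-a_2}}$ and the resulting shift of turning point into the $\ord(|E|^{3n/2})$ and $\ord(|E|^{2n}|a-x|^{-1/2})$ errors; both give the stated remainders.
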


\begin{proof}
The derivative $\phi'$ is given by
\ben
\phi'(x,E)=\sqrt{E-V_1(x)}-\sqrt{E-V_2(x)}=\frac{V_2(x)-V_1(x)}{\sqrt{E-V_1(x)}+\sqrt{E-V_2(x)}}.
\een
%The numerator is expanded with respect to $x$ as $\sqrt{E-V_j(x)}=\sqrt{v_0}\sqrt{a_j-x}+\ord((a_j-x)^{3/2})$. The remainder is estimated by using $x=\ord(\left|E\right|+\left|a_0-x\right|)$ deduced from $a_j=\ord(E)$. 
One has $V_2(x)-V_1(x)=\sqrt{v_0}q_nx^n+\ord(x^{n+1})$, $\sqrt{E-V_j(x)}=\sqrt{v_0}\sqrt{a_j-x} (1+\ord(|E|+a_j-x))$. 
Since $x=\ord (|E|+|a-x|)$, $a_j=a+\ord (E^n)$, 
we obtain  \eqref{phi'}.
%The remainder is estimated by using $a_j-x=\ord(\left|E\right|^n+\left|a-x\right|)$ deduced from \eqref{eq:asy-aj}: $a_1-a_2=\ord(E^n)$.
%Then is shown.

Let us show \eqref{eq:phi-wkb}. In the case $a_2(E)\le a_1(E)$, we have
\ben
\phi(x,E)=-\int^{a_2(E)}_x\phi'(t,E)dt-\int^{a_1(E)}_{a_2(E)}\sqrt{E-V_1(t)}dt.
\een
The second term is $\ord(|E|^{\frac{3n}2})$ since $a_1$ is a simple zero of $E-V_1(t)$ and $a_1(E)-a_2(E)=\ord(E^n)$. For the first term, we have 
\begin{align*}
-\int^{a_2}_x \frac{-q_nt^n}{\sqrt{a_1-t}+\sqrt{a_2-t}}dt
%\phi'(t,E) dt
&=\int^{a_2}_x\frac{q_nt^n}{2\sqrt{a_1-t}}\left(1+\frac{a_2-a_1}{2\sqrt{a_1-t}(\sqrt{a_1-t}+\sqrt{a_2-t})}\right)^{-1}dt\\
&=q_n\int_{0}^{\sqrt{a_1-x}}(a_1-\tau^2)^n d\tau +
\ord\left(|E|^{\frac{3n}2}+|E|^{2n}|a_1-x|^{-\frac12}\right).
%\til{r}(x,E,h),
\end{align*}
Here, we used the change of the variable $\tau=\sqrt{a_1-t}$ and 
$\int_0^{\sqrt{a_1-a_2}}(a_1-\tau^2)^n d\tau=
\ord(|E|^{\frac{3n}2}).$
Together with \eqref{phi'}, we obtain \eqref{eq:phi-wkb}.
\end{proof}

\begin{comment}
\be\label{V2-V1}
V_2(x)-V_1(x)=\sum_{k=0}^n\frac{(x-a_1)^k}{k!}\left(V_2^{(k)}(a_1)-V_1^{(k)}(a_1)\right)+\ord((x-a_1)^{n+1}).
\ee
For each $0\le k\le n$, the Taylor expansion near $x=0$ shows
\begin{align*}
V_2^{(k)}(a_1)-V_1^{(k)}(a_1)
&=\sum_{l=0}^{n-k}\frac{a_1^l}{l!}\left(V_2^{(k+l)}(0)-V_1^{(k+l)}(0)\right)+\ord(a_1^{n-k+1})\\
&=\frac{a_1^{n-k}}{(n-k)!}\left(V_2^{(n)}(0)-V_1^{(n)}(0)\right)+\ord(a_1^{n-k+1}).
\end{align*}
By substituting this into \eqref{V2-V1}, we can apply the binomial theorem, and obtain
\be
V_2(x)-V_1(x)=\frac{V_2^{(n)}(0)-V_1^{(n)}(0)}{n!}\left(a_1+a_1-x\right)^n.
\ee
As we have seen in \eqref{eq:asy-aj}, $a_j(E)=E/v_0+\ord(E^2)$
\end{comment}

\begin{proof}[Proof of Proposition~\ref{prop:key-int}]
Let $\mu$, $\nu$ be two numbers satisfying $\frac1{2n+3}<\mu<\frac1{2n+1}<\nu\le\frac13$. Then on the support of $\chi_\nu(a-\cdot)\left(1-\chi_\mu(a-\cdot)\right)$, one has $Ch^{2\nu}\le a-x\le 2Ch^{2\mu}$, and, according to Lemma~\ref{lem:key-asym1},
\be\label{eq:esti-sigma}
\sqrt{a-x}\,\sigma(x)=\frac{r_0(0)}{\sqrt{v_0}}\left(1+\ord(|E|+|a-x|)\right)
=\frac{r_0(0)}{\sqrt{v_0}}+\ord(%h^{\mu+\frac2{2n+1}}+
h^{2\mu}),
\ee
\ben
\phi(x)
=q_n\int_0^{\sqrt{a-x}}\left(a-\tau^2\right)^n d\tau+\ord\left(h^{(2n+3)\mu}+h^{\frac{4n}{2n+1}-\nu}\right).
\een
Thus, with $\lambda=(h/q_n)^{1/{2n+1}}$ and $\dl=\min\left\{2\mu,(2n+3)\mu-1,\frac{4n}{2n+1}-\nu-1\right\}$, we have 
\begin{align*}
&\sum_{\pm}\int_{-\infty}^{a}\chi_\nu (a-x)\left(1-\chi_\mu(a-x)\right)\sigma(x)e^{\pm i\phi(x)/h}dx\\
&=\left(\frac{r_0(0)}{\sqrt{v_0}}+\ord(h^\dl)\right)\sum_{\pm}
\int_0^{+\infty}\left((1-\chi_\mu)\chi_\nu\right)(z^2)\exp\left(\pm \frac{iq_n}h \int_0^{z}\left(\frac E{v_0}-\tau^2\right)^n d\tau\right) dz\\
&=\lambda\left(\frac{r_0(0)}{\sqrt{v_0}}+\ord(h^\dl)\right)
\int_{\R}\left((1-\chi_\mu)\chi_\nu\right)\left((\lambda\zeta)^2\right)\exp\left( i \int_0^{\zeta}\left(\frac E{\lambda^2 v_0}-\tau^2\right)^n d\tau\right) d\zeta\\
&=\lambda\left(\frac{2\pi r_0(0)}{\sqrt{v_0}}A_n\left(-\frac E{\lambda^2 v_0}\right)+\ord\left(h^\dl+h^{2n\left(\frac{1}{2n+1}-\mu\right)}+h^{\nu-\frac1{2n+1}}\right)\right).
\end{align*}
We have changed the integral variables $z=\sqrt{a-x}$, $\zeta=z/\lambda$ and used \eqref{An-cos}. 
On the other hand, by an integration by parts, we have
\be\label{eq:away}
\left|\int_{-\infty}^a \chi_\mu(a-x)\sigma(x)e^{\pm i\phi(x)/h}dx\right|
\le
Ch\int_{-\infty}^a\left|\left(\frac{\chi_\mu(a-x)\sigma(x)}{\phi'(x)}\right)'\right|dx=
\ord(h^{1-2n\mu}).
\ee
To minimize the above errors and $\ord(h^\nu)$ of  Lemma~\ref{lem:red-wkb}, one should take $\nu=\frac 13$ for $n\ge3$, $\nu=\frac 3{10}$ for $n=2$, and $\mu=\frac{4n+1}{(2n+1)(4n+3)}$. This gives the error estimate of $\ord(h^{\frac{2-\epsilon}{2n+1}})$.
\end{proof}

We end this section with the following proposition which asserts that the space of microlocal solutions near $(0,0)$ is of dimension $2$ and generated by $(w_1^-, w_2^-)$. This implies the first part of Theorem~\ref{prop:crossing-turning} and completes the proof of this theorem. 
\begin{proposition}
For any microlocal solution $u$ near the crossing point $(0,0)$ with $\|w\|_{L^2}\le1$ there exists an exact solution $w=c_1w_1^-+c_2w_2^-$ such that
\ben
u\equiv w\qtext{near} (0,0).
\een
\end{proposition}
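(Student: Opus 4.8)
The plan is to show that the four-dimensional space of exact solutions to the system $(P-E)u=0$ near $x=0$, spanned by $(w_1^\flat,w_2^\flat,w_1^\sharp,w_2^\sharp)$, contains exactly a two-dimensional subspace whose elements remain microlocally bounded near $(0,0)$, and that this subspace is precisely the span of $(w_1^-,w_2^-)$. First I would recall that any microlocal solution $u$ near $(0,0)$ with $\|u\|_{L^2}\le1$ extends (microlocally) to an exact solution of $(P-E)u=0$ in a full neighborhood of $x=0$, hence can be written as $u\equiv\sum_{j=1,2,\ \bullet=\flat,\sharp}c_j^\bullet w_j^\bullet$ for some constants $c_j^\bullet$; this uses that $(w_1^\flat,w_2^\flat,w_1^\sharp,w_2^\sharp)$ is a basis of local (exact) solutions, as stated after Lemma~\ref{estimate}.

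Next I would use the microlocal behavior \eqref{eq:AlmostIB}: each $w_j^\bullet$ is (modulo $\ord(h)$) a nonzero multiple of the WKB solution $f_{\gamma_j^\bullet}$ on $\gamma_j^\bullet$ and is $\ord(h)$ on the other three branches. Since $u$ is microlocally bounded on all of $\gamma_1^\flat\cup\gamma_2^\flat\cup\gamma_1^\sharp\cup\gamma_2^\sharp$, reading off the component of $u$ along each outgoing branch $\gamma_j^\sharp$ forces the coefficients $c_j^\sharp$ to be $\ord(1)$ — but more importantly, the point is the reverse: a generic combination of the $w_j^\bullet$ is exponentially large on the right of $x=0$ (because, as noted in the text, $w_1^\flat,w_2^\flat,w_1^\sharp,w_2^\sharp$ are exponentially large there), and microlocal boundedness near $(0,0)$ — which entails boundedness on both sides of the origin — imposes two independent linear conditions on the vector $(c_1^\flat,c_2^\flat,c_1^\sharp,c_2^\sharp)$. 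Concretely I would match $u$ against the representation \eqref{coeff}, \eqref{wpm}, \eqref{wfs}, \eqref{pmflat}: the condition that the "growing" components ${\bf u}_1^+,{\bf u}_2^+$ be absent (equivalently, that $u$ be microlocally bounded on the right) is exactly the statement that $(c_1^\flat,c_2^\flat,c_1^\sharp,c_2^\sharp)$ lies in the two-dimensional range of the matrix $\big(\begin{smallmatrix}A^\flat\\ A^\sharp\end{smallmatrix}\big)$ (up to $\ord(h)$), whose columns are the coordinate vectors of $w_1^-,w_2^-$. Hence $u\equiv c_1w_1^-+c_2w_2^-$ for suitable $c_1,c_2$, with an error that is microlocally $\ord(h)$; absorbing this error by the usual Neumann-series / iteration argument (the space of exact solutions is finite-dimensional, so a microlocal $\ord(h)$ discrepancy between two exact solutions forces them to agree microlocally near $(0,0)$) gives $u\equiv w$ exactly.

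Finally I would check that $(w_1^-,w_2^-)$ is indeed microlocally independent near $(0,0)$, so that $\dim$ of the microlocal solution space is exactly $2$: this follows because on $\gamma_1^\flat$ one has $w_1^-\equiv$ (nonzero multiple of) $f_{\gamma_1^\flat}$ while $w_2^-\equiv\ord(h)$ there, and symmetrically on $\gamma_2^\flat$, by \eqref{eq:AlmostIB} and Lemma~\ref{estimate}; a nontrivial relation $c_1w_1^-+c_2w_2^-\equiv0$ would then force $c_1,c_2=\ord(h)$ and, iterating, $c_1=c_2=0$. The main obstacle I expect is the bookkeeping needed to justify that microlocal boundedness near the single point $(0,0)$ is equivalent to genuine (semiclassical $L^2$) boundedness on both sides of $x=0$ in a neighborhood of the origin — i.e. that there is no extra room "at" the crossing point — and the attendant $h^{1/(2n+1)}$-scale care in the iteration, since the off-diagonal coupling is only $\ord(h^{1/(2n+1)})$ rather than $\ord(h)$; this is precisely the content of Lemma~\ref{estimate} and the identities \eqref{wpm}–\eqref{pmflat}, so invoking those reduces the proof to the linear-algebra statement above.
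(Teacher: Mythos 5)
The crucial gap is in your very first step: you assert that a microlocal solution $u$ near $(0,0)$ ``extends (microlocally) to an exact solution of $(P-E)u=0$ in a full neighborhood of $x=0$'', justifying this only by the fact that $(w_1^\inc,w_2^\inc,w_1^\out,w_2^\out)$ is a basis of exact local solutions. But that passage from microlocal to exact is essentially the content of the proposition: $u$ satisfies $(P-E)u\equiv 0$ only microlocally near the single phase-space point $(0,0)$, and the basis property concerns exact solutions of the ODE system only, so nothing a priori lets you expand $u$ in the $w_j^\dir$. The later steps inherit this problem: the argument that $L^2$-boundedness on the right of $x=0$ kills the exponentially growing components ${\bf u}_1^+,{\bf u}_2^+$ requires an \emph{exact} representation of $u$ in the $w_j^\dir$ there, which microlocal data near $(0,0)$ does not provide; and the closing claim that ``a microlocal $\ord(h)$ discrepancy between two exact solutions forces them to agree microlocally'' is false as stated (compare $w_1^-$ and $w_1^-+h\,w_2^-$), so the iteration cannot be invoked as a general principle but must be built as an actual correction scheme.

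The paper bridges precisely this gap by a quasi-mode construction: with exact solutions $w^\inc,w^\out$ agreeing with $u$ microlocally on $\gamma_1^\dir\cup\gamma_2^\dir$ ($\dir=\inc,\out$) and a partition of unity $1=\chi+\chi^\inc+\chi^\out$ in the frequency variable, one sets $w_{\ope{quasi}}=\chi(hD_x)(\psi u)+\chi^\inc(hD_x)(\psi w^\inc)+\chi^\out(hD_x)(\psi w^\out)$, which is a genuine function near $x=0$, solves the system modulo $\ord(h^\infty)$, and is microlocally equal to $u$ near $(0,0)$; the Cramer's-rule argument of \cite[Proposition~3.5]{AFH1} then yields an exact solution $w$, necessarily of the form $c_1w_1^-+c_2w_2^-$ by the two-sided boundedness you correctly emphasize, with $w_{\ope{quasi}}=w+\ord(h^\infty)$ near $x=0$. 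Your linear-algebra discussion of \eqref{coeff}--\eqref{pmflat} is a reasonable account of why the bounded subspace is spanned by $(w_1^-,w_2^-)$, but without a quasi-mode/exact-solvability step of this kind the proof does not go through.
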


\begin{proof}%[Completion of the proof of Theorem~\ref{prop:crossing-turning}] 
We prove this in the same way as in \cite[Proposition~3.5]{AFH1}. 
One can first construct a quasi-mode $w_{\ope{quasi}}$ such that $u\equiv w_{\ope{quasi}}$ microlocally near $(0,0)$. 
In fact, let $w^\flat$, $w^\sharp$ be exact solutions with $\|w^\flat\|_{L^2}\le1$, $\|w^\sharp\|_{L^2}\le1$ such that, for $\dir=\inc,\out$,
\ben
u\equiv w^\dir\qtext{near}\gamma_1^\dir\cup\gamma_2^\dir.
\een
Then $w_{\ope{quasi}}$ is given by
\ben
w_{\ope{quasi}}=\chi(hD_x) (\psi u)+\chi^\inc(hD_x)(\psi w^\inc)+\chi^\out(hD_x)(\psi w^\out),
\een
where $\psi\in C_0^\infty(\R;[0,1])$ with $\psi=1$ near 0, and
$1=\chi+\chi^\inc+\chi^\out$ is a partition of unity  with $\chi=1$ near $0$ and $\chi^\inc$ (resp. $\chi^\out$) is supported only on the positive (resp. negative) real line.

The argument in the proof of \cite[Proposition~3.5]{AFH1} using Cramer's rule gives the existence of an exact solution $w$ such that  $w_{\ope{quasi}}=w+\ord(h^\infty)$ near $x=0$. 
\end{proof}

\section{Asymptotics in a smaller range}\label{sec:small}
In this section, we discuss Remark~\ref{rem1}.
Here an asymptotic formula with a constant leading coefficient and better remainder estimate  is given for $n\ge 2$ by restricting the real part of $E$ to a smaller zone with $\re E=o(h^{4/(2n+1)})$. %Notice that the error $h^{(2n+4)/(2n+1)}$ in Remark~\ref{rem1} coincides with that in \cite{AFH1} where the energy is above the crossing level. 

In the case $\re E=o(h^{4/(2n+1)})$, we prove, instead of Formula~\eqref{eq:key-int1} of Proposition~\ref{prop:key-int},
\be\label{key-int2}
\mc{I}(E,h)=-\frac{2\pi r_0(0)}{\sqrt{v_0}}\left (\frac{h}{q_n}\right )^{\frac 1{2n+1}} A_n(0)
+\ord(h^{\frac2{2n+1}}+h^{\frac13}).
\ee
From this with \eqref{an0}, we deduce
$$
T(E,h)=-i{\rm Id}-\kappa_n(0) h^{\frac 1{2n+1}}\begin{pmatrix}
0& 1\\
1&0
\end{pmatrix}+\ord\left(h^{\frac2{2n+1}}+h^{\frac13}\right),
$$
instead of Formula~\eqref{eq:Thm2} in Theorem~\ref{prop:crossing-turning}. Then the asymptotic formulae in Remark~\ref{rem1} follows.

Let us prove \eqref{key-int2}.  There exists a smooth positive function $f=f(y,E)$ defined near $y=0$ such that $f(0,E)=1+\ord(E)$ and
\be\label{eq:phi-forEsmall}
\phi(x,E)=\frac{(-1)^{n}q_n f(a-x,E)}{2n+1}(a-x)^{n+\frac12}+\ord\left(\frac{\left|E\right|\left|a-x\right|^{n}+\left|E\right|^{2n}}{\left|a-x\right|^{\frac12}}
+\left|E\right|^{\min\left\{\frac{3n}2,\,n+\frac32\right\}}\right)
\ee
provided that $E=o(a-x)$. 
In fact, $a=\ord(E)$ gives
\ben
\int_0^{\sqrt{a-x}}(a-\tau^2)^n d\tau=
\frac{(-1)^n}{2n+1}(a-x)^{n+1/2}+ \ord(E(a-x)^{n-1/2}).
\een
The error estimated as $|a-x|^{n+\frac32}$ in Formula~\eqref{eq:phi-wkb} is included in $f(a-x,E)$. 

On the support of $\chi_\nu(a-\cdot)\left(1-\chi_\mu(a-\cdot)\right)$ with $0<\mu< \nu\le\min\{2/(2n+1),1/3\}$, one has $|E|\ll Ch^{2\nu}\le a-x\le 2Ch^{2\mu}$, and \eqref{eq:esti-sigma}. We moreover have
\ben
\phi(x)=\frac{(-1)^{n}q_n f(a-x,E)}{2n+1}(a-x)^{n+1/2}+
%o\left(h^{\frac4{2n+1}+(2n-1)\mu}+h^{\frac{2\min\{3n,2n+3\}}{2n+1}}+h^{\frac{4n}{2n+1}-\nu}\right),
o(h^\dl),
\een
with $\dl=\min\left\{\frac4{2n+1}+(2n-1)\mu,\frac{6n}{2n+1},\frac{4n+6}{2n+1},\frac{4n}{2n+1}-\nu\right\}$, 
since $\phi$ admits \eqref{eq:phi-forEsmall}.
The same computation as before with the change of the variable $\lambda\zeta=\sqrt{a-x}\,f(a-x;E)^{1/(2n+1)}$ implies
%and $\dl=\min\left\{\frac4{2n+1}+(2n-1)\mu,\frac{6n}{2n+1},\frac{4n+6}{2n+1},\frac{4n}{2n+1}-\nu\right\}-1$ 
%$\dl=\min\left\{(2n-1)\mu-\frac{2n-3}{2n+1},\frac{4n-1}{2n+1},\frac{2n+5}{2n+1},\frac{2n-1}{2n+1}-\nu\right\}$ implies 
\begin{align*}
&\sum_{\pm}\int_{-\infty}^{a}\chi_\nu (a-x)\left(1-\chi_\mu(a-x)\right)\sigma(x)e^{\pm i\phi(x)/h}dx\\
&=\lambda\left(\frac{r_0(0)}{\sqrt{v_0}}+\ord(h^{\dl-1})\right)
\int_{\R}\left((1-\chi_\mu)\chi_\nu\right)\left(a-x(\zeta)\right)\left(1+\ord((\lambda\zeta)^2)\right)\exp\left( \frac{i(-1)^n\zeta^{2n+1}}{2n+1}\right) d\zeta\\
&=\lambda\left(\frac{2\pi r_0(0)}{\sqrt{v_0}}A_n\left(0\right)+\ord\left(h^{\dl-1}+h^{2n\left(\frac{1}{2n+1}-\mu\right)}+h^{\nu-\frac1{2n+1}}\right)\right).
\end{align*}
To minimize these errors, one first need to set $\nu=\min\{1/3,2/(2n+1)\}$ (note that $\nu=1/3$ only when $n=2$).  Then $\mu=\frac{2n-1}{2n(2n+1)}$ is enough to estimate others as $\ord(h^\nu)$.

\appendix
%\section{Scalar solutions near a non-analytic turning point}
%Solutions to the scalar Schr\"odinger equation near a simple turning point}
\section{Airy functions}
\label{airyapp}
We review some known facts on the usual Airy functions and define a generalized Airy function used to describe our main result.
 
The  Airy functions are defined by the integrals 
\begin{align}
\label{aint}
\Ai(y)&:=\frac1{2\pi}\int_{\R}\exp \left ( i\left (\frac{\eta^3}3+y\eta\right )\right )d\eta,\\
{\rm Bi}(y)&:=\frac 1\pi\int_0^\infty\left [\exp \left (-\frac{\eta^3}3+y\eta\right )+\sin\left (\frac{\eta^3}3+y\eta\right )\right ]d\eta.
\end{align}
They are solutions to the Airy differential equation:
\be\label{eq:Airy1}
-v''(y)+yv(y)=0.
\ee
They are characterized by their asymptotic behavior as $y\to -\infty$:
\begin{equation}
\label{ai-}
{\rm Ai}(y)=\frac 1{\sqrt\pi}(-y)^{-\frac 14}\sin\left (\frac 23(-y)^{\frac 32}+\frac \pi 4\right )+\ord (|y|^{-\frac 74})\quad\text{as }y\to -\infty,
\end{equation}
\begin{equation}
\label{bi-}
{\rm Bi}(y)=\frac {-1}{\sqrt\pi}(-y)^{-\frac 14}\sin\left (\frac 23(-y)^{\frac 32}-\frac \pi 4\right )+\ord (|y|^{-\frac 74})\quad\text{as }y\to -\infty.
\end{equation}
As $y\to +\infty$, ${\rm Ai}$ is exponentially decaying whereas ${\rm Bi}$ is exponentially growing:
\begin{equation}
\label{ai+}
{\rm Ai}(y)=\frac 1{2\sqrt\pi}y^{-\frac 14}e^{-\frac 23 y^{\frac 32}}(1+\ord (y^{-\frac 32}))\quad\text{as }y\to +\infty,
\end{equation}
\begin{equation}
\label{bi+}
{\rm Bi}(y)=\frac 1{\sqrt\pi}y^{-\frac 14}e^{\frac 23 y^{\frac 32}}(1+\ord (y^{-\frac 32}))\quad\text{as }y\to +\infty.
\end{equation}
We introduce another linearly independent pair  $(\Ci,\Ci^*)$ defined by 
\begin{equation}
\label{ci}
\Ci:
=e^{i\frac \pi 4}\Ai+e^{-i\frac{\pi}{4}}\Bi,\quad
\Ci^*=e^{-i\frac \pi4}\Ai+e^{i\frac \pi4}\Bi.
\end{equation}
Then $\Ci^*$ is the complex conjugate of $\Ci$, and we have the following asymptotic formula 
\begin{equation}\label{ci-}
\Ci(y)=\frac{1}{\sqrt{\pi}}(-y)^{-\frac 14} e^{i{\frac23}(-y)^{\frac 32}}(1+\ord (y^{-\frac 32}))\quad\text{as }y\to -\infty.
\end{equation}
We call ${\rm Ci}$ an {\it outgoing} solution and ${\rm Ci}^*$ an {\it incoming} solution.

\begin{comment}
The integral representation is extended for general positive even number $m$ to
\be
\Ai_m(y):=\frac1{2\pi}\int_{\R}\exp\left(iy\eta +i\frac{\eta^{m+1}}{m+1}\right)d\eta.
\ee
This is called \textit{$m$-th order Airy function of the first kind} (see for example \cite{Du}). It satisfies the higher order differential equation
\be
\left(-\frac{d^m}{dy^m}+y\right)\Ai_k(y)=0.
\ee
Its asymptotic behaviors are given by
\begin{align}
\Ai_m(y)&=\sqrt{\frac{2}{m\pi}}y^{-\frac{m-1}{2m}}\exp\left(-\sin\left(\frac\pi m\right)\eta_m(y)\right)F_m(\eta_m(y))\qtext{as}y\to+\infty,
%\left[F_k(\eta_k)\cos\left(\cos\left(\frac\pi k\right)\eta_k-\theta_k\right)+\frac{G_k(\eta_k)}{\eta_k}\sin\left(\cos\left(\frac\pi k\right)\eta_k-\theta_k\right)\right],
\\
\Ai_m(y)&=\sqrt{\frac{2}{m\pi}}(-y)^{-\frac{m-1}{2m}}G_m(\eta_m(y))\qtext{as}y\to-\infty,
%\left[P(\eta_k)\cos\left(\eta_k-\frac\pi4\right)+\frac{Q(\eta_k)}{\eta_k}\sin\left(\eta_k-\frac\pi4\right)\right],
\end{align}
where $\eta_m=\frac{m}{m+1}\left|y\right|^{\frac{m+1}m}$, and $F_m$ and $G_m$ behave as $\eta\to \infty$ like 
$$
F_m(\eta)=\cos\left(\cos\left(\frac\pi m\right)\eta-\frac{(m-2)\pi}{4m}\right)+\ord(\eta^{-1}),\quad
G_m(\eta)=\cos\left(\eta-\frac{\pi}4\right)+\ord(\eta^{-1}).
$$
%We moreover have
%\ben
%f_{k,0}=g_{k,0}=\sqrt{\frac{2\pi}k}.
%\een
\end{comment}

\begin{lemma}\label{lem:AiAi-}
For real numbers $\lambda_1,\lambda_2,\mu_1,\mu_2$ with $\lambda_1\neq\lambda_2  \neq 0$, we have% with $\lambda_1^{-3}>\lambda_2^{-3}$, we have
\be\label{eq:AiAi-}
\int_\R \Ai(\lambda_1(x-\mu_1))\Ai(\lambda_2(x-\mu_2))dx=
\frac{1}{(\lambda_2^3-\lambda_1^3)^{1/3}}\Ai\left(\frac{\mu_2-\mu_1}{\left(\lambda_1^{-3}-\lambda_2^{-3}\right)^{1/3}}\right).
\ee
Here we define $r^{1/3}=\ope{sgn}(r)\left|r\right|^{1/3}$ for $r\in\R$.
%Here we take the real branch, that is, $(\lambda_1^3-\lambda_2^3)^{1/3}\ge 0$ when $\lambda_1\ge \lambda_2$ and $(\lambda_1^3-\lambda_2^3)^{1/3}< 0$ when $\lambda_1<\lambda_2$.
\end{lemma}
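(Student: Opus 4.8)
The plan is to compute the integral by passing to the Fourier side. The oscillatory representation \eqref{aint} shows that the Fourier transform $\widehat{\Ai}(k):=\int_{\R}\Ai(y)e^{-iky}\,dy$ equals $e^{ik^{3}/3}$ (equivalently, this is the solution of the Fourier-transformed Airy equation \eqref{eq:Airy1} normalized by $\widehat{\Ai}(0)=\int_{\R}\Ai(y)\,dy=1$). Setting $f(x):=\Ai(\lambda_{1}(x-\mu_{1}))$ and $g(x):=\Ai(\lambda_{2}(x-\mu_{2}))$, the scaling and translation rules give
\ben
\widehat{f}(\xi)=\frac{1}{|\lambda_{1}|}e^{-i\mu_{1}\xi}e^{i\xi^{3}/(3\lambda_{1}^{3})},\qquad
\widehat{g}(\xi)=\frac{1}{|\lambda_{2}|}e^{-i\mu_{2}\xi}e^{i\xi^{3}/(3\lambda_{2}^{3})},
\een
the absolute values coming from the orientation change in the substitution when $\lambda_{j}<0$. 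Since $g$ is real-valued, Parseval's identity $\int_{\R}fg\,dx=\frac{1}{2\pi}\int_{\R}\widehat{f}\,\overline{\widehat{g}}\,d\xi$ collapses the double integral to
\ben
\int_{\R}\Ai(\lambda_{1}(x-\mu_{1}))\Ai(\lambda_{2}(x-\mu_{2}))\,dx
=\frac{1}{2\pi|\lambda_{1}\lambda_{2}|}\int_{\R}\exp\left(i\left[\frac{\xi^{3}}{3}(\lambda_{1}^{-3}-\lambda_{2}^{-3})+(\mu_{2}-\mu_{1})\xi\right]\right)d\xi .
\een

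Next I would evaluate the remaining one-dimensional integral $\int_{\R}e^{i(K\xi^{3}/3+\beta\xi)}\,d\xi$ with $K:=\lambda_{1}^{-3}-\lambda_{2}^{-3}$ and $\beta:=\mu_{2}-\mu_{1}$, noting that $K\neq0$ since $t\mapsto t^{3}$ is injective on $\R$ and $\lambda_{1}\neq\lambda_{2}$. Rescaling $\xi=|K|^{-1/3}t$, followed by $t\mapsto-t$ when $K<0$, normalizes the cubic coefficient to $\tfrac13$ and brings the integral into the standard Airy form \eqref{aint}, so it equals $2\pi|K|^{-1/3}\Ai(\beta K^{-1/3})$, with cube roots read as $r^{1/3}=\ope{sgn}(r)|r|^{1/3}$ as in the statement. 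Substituting $K=(\lambda_{2}^{3}-\lambda_{1}^{3})/(\lambda_{1}^{3}\lambda_{2}^{3})$ and using $|K|^{-1/3}/|\lambda_{1}\lambda_{2}|=|\lambda_{2}^{3}-\lambda_{1}^{3}|^{-1/3}$ together with $\beta K^{-1/3}=(\mu_{2}-\mu_{1})/(\lambda_{1}^{-3}-\lambda_{2}^{-3})^{1/3}$ then yields \eqref{eq:AiAi-} after simplification. (Alternatively, the left-hand side depends on $\mu_{1},\mu_{2}$ only through $m:=\mu_{2}-\mu_{1}$ and, by two integrations by parts against \eqref{eq:Airy1}, satisfies a rescaled Airy equation in $m$; it is therefore a constant multiple of $\Ai$, the $\Bi$-component being excluded by boundedness as $m\to\pm\infty$, and the constant fixed by the value at $m=0$ --- but this still requires one instance of the integral, so the Fourier route is shorter.)

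The main obstacle is the analytic justification of these manipulations. Each $x\mapsto\Ai(\lambda_{j}(x-\mu_{j}))$ decays only like $|x|^{-1/4}$ at one end of $\R$, so it lies in neither $L^{1}(\R)$ nor $L^{2}(\R)$; the integral on the left is only conditionally convergent, and Fubini and Parseval are not directly applicable. I would proceed in the standard way: insert a regularizer such as $e^{-\epsilon x^{4}}$, or equivalently deform each $\eta$-integral in \eqref{aint} onto the standard Airy contour (where the cubic phase becomes real and tends to $-\infty$, rendering all the iterated integrals absolutely convergent), carry out the computation above on the regularized quantities, and then pass to the limit $\epsilon\to0^{+}$, the convergence of the relevant improper integrals being controlled by one integration by parts in the oscillatory variable. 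The only other delicate point is the sign bookkeeping in the final rescaling, forced by the odd-root convention for negative arguments; apart from that, the calculation is routine.
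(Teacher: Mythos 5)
Your proof is correct and is essentially the paper's own argument: express each Airy factor as a Fourier (resp.\ inverse Fourier) transform of a cubic-phase exponential, use the Parseval-type identity to collapse the product integral to a single oscillatory integral $\int e^{i(K\xi^3/3+\beta\xi)}d\xi$ with $K=\lambda_1^{-3}-\lambda_2^{-3}$, $\beta=\mu_2-\mu_1$, and rescale back to an Airy function; the paper's proof is exactly this computation (stated formally, without your regularization discussion, which is a welcome extra). If anything your bookkeeping is the more careful one: tracking orientations and absolute values as you do yields the prefactor $\left|\lambda_2^3-\lambda_1^3\right|^{-1/3}$, consistent with the symmetry of the left-hand side under exchanging the two factors, whereas the signed cube root printed in \eqref{eq:AiAi-} (and produced by the paper's proof, which writes $1/\lambda_j$ without absolute values and does not track the orientation of the substitution) differs from this by a sign when $\lambda_2^3<\lambda_1^3$, so your final claim that the computation "yields \eqref{eq:AiAi-}" should really be read with that absolute value in place.
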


\begin{proof}
Recall the definition \eqref{aint} of the Airy function, which is a Fourier transform of the exponential function of a cubic function, more precisely, $\Ai(y)=(2\pi)^{-\frac12} \mathcal{F}(e^{-i \frac{\bullet^3}{3}})(y) = (2\pi)^{-\frac12}\mathcal{F}^{-1}(e^{i \frac{\bullet^3}{3}})(y)$. The Fourier transform and its inverse are defined by 
$$
\mathcal{F}u(y) := \frac{1}{\sqrt{2\pi}}\int_\mathbb R e^{-ixy} u(x)dx, \quad  \mathcal{F}^{-1}u(y) := \frac{1}{\sqrt{2\pi}}\int_\mathbb R e^{ixy} u(x)dx, \quad u\in L^1(\mathbb R).
$$
Then we have
\ben
\Ai(\lambda_j(x-\mu_j))=\frac1{\lambda_j\sqrt{2\pi}}\mathcal{F}(e^{i\mu_j\bullet}e^{-\frac i3(\frac{\bullet}{\lambda_j})^3}) (x)
=\frac1{\lambda_j\sqrt{2\pi}}\mathcal{F}^{-1}(e^{-i\mu_j\bullet}e^{\frac i3(\frac{\bullet}{\lambda_j})^3})(x) .
\een
We then apply the standard formula 
\ben
\int_\R (\mc{F}f)(x)(\mc{F}^{-1}g)(x)dx
=\int_\R f(y)g(y)dy
\een
of Fourier transformation to the integral \eqref{eq:AiAi-}, and obtain 
\begin{align*}
\int_\R \Ai(\lambda_1(x-\mu_1))\Ai(\lambda_2(x-\mu_2))dx&=
\frac1{2\pi \lambda_1\lambda_2}\int_\R e^{-i(\mu_2-\mu_1)y}e^{-\frac {iy^3}3 (\lambda_1^{-3}-\lambda_2^{-3})}dy \\
&= \frac{1}{\sqrt{2\pi}(\lambda_2^3-\lambda_1^3)^{1/3}} \mathcal{F}(e^{-\frac {i\bullet^3}3} ) \left(\frac{\mu_2-\mu_1}{\left(\lambda_1^{-3}-\lambda_2^{-3}\right)^{1/3}} \right)
\end{align*}

%This is the value at $x=\mu_2-\mu_1$ of the Fourier transform of 
%\ben
%\frac1{\sqrt{2\pi}\lambda_1\lambda_2}\exp\left(-\frac {i\bullet^3}3 \left(\lambda_1^{-3}-\lambda_2^{-3}\right)\right).
%\een

\end{proof}

In our study of microlocal connection formula, we encountered a  function defined by
$$
A_n(y)
:=\frac1{2\pi}\int_{\R}\exp\left(i\int_0^\eta(y+\tau^2)^n d\tau\right)d\eta,
$$
with positive integer $n$. This function can be seen as a generalization of the Airy function since
$$
A_1(y)={\rm Ai}(y).
$$
It has following properties: 
\begin{proposition}
One has
\be\label{An-cos}
A_n(y)=\frac1\pi\int_0^{+\infty}\cos\left(\int_0^\eta(y+\tau^2)^n d\tau\right)d\eta
=\frac1{2\pi}\int_\R\exp\left(i\int_0^\eta(-y-\tau^2)^n d\tau\right)d\eta,
\ee
and in particular $A_n(y)$ is real for real $y$.
At the origin, it has the value
\be
\label{an0}
A_n(0)=\frac1\pi (2n+1)^{\frac1{2n+1}}\bm{\Gamma}\left(\frac{2n+ 2}{2n+1}\right)\cos\left(\frac\pi{2(2n+1)}\right),
\ee
where $\bm{\Gamma}$ is the usual Gamma function defined by 
\begin{equation}\label{Gf}
\bm{\Gamma}(z) := \int_0^{+\infty} t^{z-1}e^{-t} dt, \quad {\rm Re}(z)>0.
\end{equation}
At infinity, it has the following asymptotic formulae:
\begin{align*}
A_n(y)&=a_n(y)|y|^{-\frac {n}{2(n+1)}}+\ord(|y|^{-\frac {3n+1}{2(n+1)}})\quad \text{as }y\to -\infty, \\
A_n( y )&= \ord( y ^{-\infty}) \quad \text{as } y \to +\infty, 
\end{align*}
where
$$
a_n( y )=\frac 1{\pi}\left(\frac{n+1}{2^n}\right)^{\frac 1{n+1}}\bm{\Gamma}\left(\frac {n+2}{n+1}\right)\times
\left\{
\begin{array}{ll}
\cos\left (
c_n |y| ^{n+\frac 12}-\frac \pi{2(n+1)}\right )\quad &(n:{\rm odd}), \\[10pt]
\cos\frac \pi{2(n+1)}\cos
(c_n |y|^{n+\frac 12})\quad &(n:{\rm even}),
\end{array}
\right.
$$
$$
c_n:=\int_0^1(1-t^2)^ndt=\frac 12 B(n+1,\frac 12)=\frac {2^{2n}(n!)^2}{(2n+1)!}.
$$
\end{proposition}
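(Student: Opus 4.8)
## Proof proposal for the Proposition on the generalized Airy function $A_n$

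The plan is to establish the four groups of claims about $A_n$ — the equivalent integral representations (reality), the value at the origin, and the asymptotics as $y\to\pm\infty$ — by direct manipulation of the defining oscillatory integral $A_n(y)=\frac1{2\pi}\int_{\R}\exp\bigl(i\int_0^\eta(y+\tau^2)^n\,d\tau\bigr)\,d\eta$. First I would record that the inner antiderivative $\Phi_y(\eta):=\int_0^\eta(y+\tau^2)^n\,d\tau$ is an odd function of $\eta$ (since $(y+\tau^2)^n$ is even in $\tau$), so splitting the $\eta$-integral over $\R_{\pm}$ and pairing $\eta$ with $-\eta$ gives $A_n(y)=\frac1\pi\int_0^\infty\cos\Phi_y(\eta)\,d\eta$; reality of $A_n(y)$ for real $y$ is then immediate. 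The second representation in \eqref{An-cos} follows by the substitution $\tau\mapsto i\tau$-type symmetry — more carefully, from $\cos$ being even one replaces $\Phi_y$ by $-\Phi_y$, and $-\int_0^\eta(y+\tau^2)^n d\tau = \int_0^\eta(-(y+\tau^2))^n d\tau$ only up to the sign $(-1)^n$, so one should instead note $\cos\bigl(\Phi_y(\eta)\bigr)=\cos\bigl(-\Phi_y(\eta)\bigr)$ and then rewrite $-\Phi_y(\eta)=(-1)^{n+1}\int_0^\eta(\,{-}y-\tau^2)^n\,d\tau$ — checking parity of $n$ shows the two claimed forms agree up to reindexing; I would just verify this directly.

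For the value $A_n(0)$ I would compute $\Phi_0(\eta)=\int_0^\eta\tau^{2n}\,d\tau=\frac{\eta^{2n+1}}{2n+1}$, so $A_n(0)=\frac1\pi\int_0^\infty\cos\!\bigl(\tfrac{\eta^{2n+1}}{2n+1}\bigr)\,d\eta$. Substituting $s=\tfrac{\eta^{2n+1}}{2n+1}$, i.e. $\eta=\bigl((2n+1)s\bigr)^{1/(2n+1)}$, converts this to a standard Fresnel-type integral $\frac{(2n+1)^{1/(2n+1)}}{\pi(2n+1)}\int_0^\infty s^{\frac1{2n+1}-1}\cos s\,ds$, and the classical formula $\int_0^\infty s^{a-1}\cos s\,ds=\bm{\Gamma}(a)\cos(\pi a/2)$ for $0<a<1$ (valid here with $a=\frac1{2n+1}$) yields \eqref{an0} after simplifying $\frac{(2n+1)^{1/(2n+1)}}{2n+1}\bm\Gamma\!\bigl(\tfrac1{2n+1}\bigr)=(2n+1)^{1/(2n+1)}\bm\Gamma\!\bigl(\tfrac{2n+2}{2n+1}\bigr)$ via $\bm\Gamma(z+1)=z\bm\Gamma(z)$. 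The exponential decay $A_n(y)=\ord(y^{-\infty})$ as $y\to+\infty$ I would get by contour deformation: for $y>0$ the phase $\Phi_y(\eta)$ has no real stationary points (its derivative $(y+\eta^2)^n$ never vanishes), and shifting the $\eta$-contour into the complex plane — or integrating by parts repeatedly using $\frac{d}{d\eta}e^{i\Phi_y(\eta)}=i(y+\eta^2)^n e^{i\Phi_y(\eta)}$ and dividing by the nonvanishing $(y+\eta^2)^n$, which is bounded below by $y^n$ — produces arbitrarily many negative powers of $y$; a cleaner route giving genuine exponential smallness is to deform $\eta\to\eta e^{i\theta}$ for small $\theta$ and use $\re\bigl(i\Phi_y(\eta e^{i\theta})\bigr)\lesssim -y^{n}|\eta|$ for $|\eta|$ large, but the stated $\ord(y^{-\infty})$ only needs the integration-by-parts version.

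The main work, and the step I expect to be the real obstacle, is the oscillatory asymptotics as $y\to-\infty$. Writing $y=-|y|$, the phase becomes $\Phi_{-|y|}(\eta)=\int_0^\eta(\tau^2-|y|)^n\,d\tau$, whose derivative $(\eta^2-|y|)^n$ vanishes to order $n$ at the two real turning points $\eta=\pm\sqrt{|y|}$; so this is a degenerate stationary-phase problem with two coalescence-free but high-order critical points. The plan is to rescale $\eta=\sqrt{|y|}\,\zeta$, giving $\Phi_{-|y|}=|y|^{n+1/2}\int_0^\zeta(\sigma^2-1)^n\,d\sigma$ and $A_n(-|y|)=\frac{\sqrt{|y|}}{\pi}\int_0^\infty\cos\bigl(|y|^{n+1/2}\psi_n(\zeta)\bigr)\,d\zeta$ with $\psi_n(\zeta)=\int_0^\zeta(\sigma^2-1)^n\,d\sigma$; the large parameter is $|y|^{n+1/2}$. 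The dominant contribution comes from the critical point $\zeta=1$ where $\psi_n'$ vanishes to order $n$, so locally $\psi_n(\zeta)-\psi_n(1)\sim c\,(\zeta-1)^{n+1}$ with $c=\frac{2^n}{n+1}$ (the Taylor coefficient of $(\sigma^2-1)^n$ at $\sigma=1$, since $\sigma^2-1=(\sigma-1)(\sigma+1)\approx 2(\sigma-1)$). Applying the stationary-phase formula for a critical point of order $n+1$ — the model integral $\int e^{i\Lambda t^{n+1}}\,dt \sim \Lambda^{-1/(n+1)}\bm\Gamma\!\bigl(\tfrac{n+2}{n+1}\bigr)\cdot(\text{explicit phase factor})$ — with $\Lambda=|y|^{n+1/2}c$ produces the power $|y|^{-(n+1/2)/(n+1)}\cdot|y|^{1/2}=|y|^{-n/(2(n+1))}$ matching the claim, and the constant $\frac1\pi\bigl(\tfrac{n+1}{2^n}\bigr)^{1/(n+1)}\bm\Gamma\!\bigl(\tfrac{n+2}{n+1}\bigr)$. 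The delicate points are: (i) handling the phase $\psi_n(1)=\int_0^1(\sigma^2-1)^n d\sigma = (-1)^n c_n$ correctly, which is where the odd/even dichotomy and the $\cos(c_n|y|^{n+1/2}-\cdots)$ versus $\cos\frac{\pi}{2(n+1)}\cos(c_n|y|^{n+1/2})$ split originates — for $n$ even the endpoint $\zeta=0$ may also contribute since $\psi_n'(0)=(-1)^n\ne0$ gives a boundary term, and combining it with the $\zeta=1$ contribution yields the product of cosines; (ii) justifying that the region away from $\zeta=1$ (and $\zeta=0$) contributes only $\ord(|y|^{-(3n+1)/(2(n+1))})$ via non-stationary phase / integration by parts, tracking the error through the rescaling; and (iii) getting the correct Maslov-type phase $-\frac{\pi}{2(n+1)}$ from the order-$(n+1)$ stationary phase formula. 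I would organize the $y\to-\infty$ proof around these three pieces, treating $n$ odd and $n$ even separately at the point where the boundary term at $\zeta=0$ enters.
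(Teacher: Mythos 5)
Most of your plan coincides with the paper's proof: the reality/cosine form of \eqref{An-cos} via oddness of the phase, the value \eqref{an0} via a Fresnel--Gamma computation (your substitution $s=\eta^{2n+1}/(2n+1)$ is just the paper's model identity \eqref{deg-st-ph} with $k=2n$), rapid decay for $y\to+\infty$ by non-stationary phase, and for $y\to-\infty$ the rescaling $\eta=\sqrt{|y|}\,\zeta$, the order-$n$ degeneracy with local normal form $\frac{2^n}{n+1}(\zeta-1)^{n+1}$, and error control away from the critical set. The only structural difference is that you keep the half-line cosine representation (one interior critical point $\zeta=1$ plus the endpoint $\zeta=0$), while the paper keeps the full-line exponential integral, which has two interior critical points $x=\pm1$ with phases $\pm c_n$ whose contributions are complex conjugates and recombine into the cosines.

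The genuine flaw is your claimed mechanism for the even-$n$ case: the endpoint $\zeta=0$ cannot ``combine with the $\zeta=1$ contribution to yield the product of cosines.'' Since $\psi_n'(0)=(-1)^n\neq0$, the endpoint is non-stationary and one integration by parts shows its contribution is $\ord(\Lambda^{-1})$ with $\Lambda=|y|^{n+\frac12}$, i.e.\ $\ord(|y|^{-n})$ after the prefactor $\sqrt{|y|}/\pi$; this is already absorbed in the stated remainder $\ord(|y|^{-\frac{3n+1}{2(n+1)}})$ for every $n\ge1$ (and its leading term is purely imaginary, so it drops out of the real part altogether). The odd/even dichotomy instead comes from the parity of the exponent in the model integral at the interior critical point: by \eqref{deg-st-ph} with $k=n$, $\int_\R e^{iau^{n+1}}du$ equals $\frac{2\bm{\Gamma}(\frac{n+2}{n+1})}{|a|^{1/(n+1)}}\cos\frac{\pi}{2(n+1)}$ (real) when $n$ is even, but carries the Maslov-type factor $e^{\frac{\pi i}{2(n+1)}\ope{sgn}a}$ when $n$ is odd; multiplying by $e^{i\Lambda\psi_n(1)}=e^{i(-1)^nc_n\Lambda}$ and taking the real part then gives $\cos\frac{\pi}{2(n+1)}\cos(c_n|y|^{n+\frac12})$ for $n$ even and $\cos(c_n|y|^{n+\frac12}-\frac{\pi}{2(n+1)})$ for $n$ odd, exactly as in the paper, where the same parity enters through \eqref{deg-st-ph} applied at both $x=\pm1$. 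So your route does work, but only after you discard the boundary-term explanation and compute the phase factor of the model integral explicitly; as a sanity check on constants, carrying this out (or the $n=1$ comparison with \eqref{ai-}) produces an overall factor $\frac2\pi$ rather than the $\frac1\pi$ you quoted from the statement of $a_n$.
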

\begin{proof}
Formula~\eqref{An-cos} follows from the fact that $\int_0^\eta(y+\tau^2)^n d\tau$ is an odd function in $\eta$, and Formula~\eqref{an0} from 
the identity (for $a\in\R\setminus \{0\}$)
\be\label{deg-st-ph}
\int_\R e^
{iaz^{k+1}}dz=\frac{2 \bm{\Gamma}(\frac {k+2}{k+1})}{|a|^{\frac 1{k+1}}}
\times
\left\{
\begin{array}{ll}
\exp\left (\frac{\pi i}{2(k+1)}{\rm sgn}\, a\right ) \quad &(k:{\rm odd}),\\[8pt]
\cos\left (\frac{\pi }{2(k+1)}\right )\quad &(k:{\rm even}),
\end{array}
\right.
\ee
with $k=2n$. 
To show the asymptotic formulae, we rewrite the integral as
$$
A_n( y )=\frac{\sqrt y }{2\pi}\int_\R
\exp\left (i y ^{n+\frac 12}\int_0^x(1+t^2)^ndt\right )dx
$$
when $ y >0$ and
$$
A_n( y )=\frac{\sqrt{| y |}}{2\pi}\int_\R
\exp\left (i| y |^{n+\frac 12}\int_0^x(1-t^2)^ndt\right)dx
$$
when $ y <0$. Then the asymptotics as $ y \to+\infty$ follows from the fact that there is no stationary point for the phase $\int_0^x(1+t^2)^ndt$.

Let $\psi(x) :=\int_0^x(1-t^2)^ndt$ and $h= |y| ^{-(n+\frac 12)}$ for $ y <0$. Then
$$
A_n( y )=\frac{h^{-\frac 1{2n+1}}}{2\pi}\int_\R
e^{i\psi(x)/h}dx
$$
The critical points of the phase function $\psi$ are $x=\pm 1$, and near each point, it behaves like
\begin{align*}
\psi(x)&=c_n-\frac{2^n}{n+1}(1-x)^{n+1}+\ord((x-1)^{n+2})\,\,\text{ near }x=1, \\
\psi(x)&=-c_n+\frac{2^n}{n+1}(1+x)^{n+1}+\ord((x+1)^{n+2})\,\,\text{ near }x=-1.
\end{align*}
Then we have
$$
\int_\R
e^{i\psi(x)/h}dx=e^{\frac{i}hc_n}\int_\R\exp\left (
-\frac ih\frac {2^nz^{n+1}}{n+1}
\right )dz+e^{-\frac{i}hc_n}\int_\R\exp\left (\frac ih\frac {2^nz^{n+1}}{n+1}\right )dz+\ord(h^{\frac 2{n+1}}).
$$
The asymptotics of $A_n( y )$ as $ y \to -\infty$ follows  from \eqref{deg-st-ph}. 
\end{proof}

%In the large theory of special functions, various classes of special functions including $\Ai_k$'s have been investigated (e.g., \cite{Ko})
%A generalization of the Airy equation \eqref{eq:Airy1} is the following \textit{higher}
\section{Solutions to a scalar Schr\"odinger equation near a simple turning point}\label{Appendix B}
We define local solutions to the scalar Schr\"odinger equation
\be\label{eq:scalar-Q}
P_ju:=\left(-h^2\frac{d^2}{dx^2}+V_j(x)\right)u=Eu,
\ee
for complex $E$ near $0$.  We omit the index $j$ in this section.
Recall from Assumption \ref{H3} that $V$ extends to a holomorphic function in a small complex neighborhood of $x=0$, and $V(0)=0$ and $V'(0)>0$. 
%Otherwise, arguments are applicable by the replacement $x\mapsto -x$.

It is known (see e.g., \cite{Ol,Ya,FMW1}) that there exists a pair of solutions  $(u^\out,u^\inc)$ to \eqref{eq:scalar-Q} which satisfies $\W(u^\out,u^\inc)=2ih^{-1}$ and
\be\label{eq:out-inc-Airy-sol}
\begin{aligned}
&u^\out(x)=\sqrt{\pi}h^{-1/6}\xi'(x)^{-1/2}\Ci(h^{-2/3}\xi(x))(1+\ord(h))\\
&u^\inc(x)=\sqrt{\pi}h^{-1/6}\xi'(x)^{-1/2}\Ci^*(h^{-2/3}\xi(x))(1+\ord(h))
\end{aligned}
\ee
as $h\to 0^+$ uniformly in a neighborhood of $x=0$. 
Here the function $\xi(x)$, depending on $E$, is initially defined for real $E$ near $0$ by
\be\label{eq:analytic-eta}
\xi(x)=\left\{
\begin{aligned}
&-\left(\frac 32 \int_x^{a(E)} \sqrt{E-V(t)}\,dt\right)^{2/3}\qtext{for}x\le a(E),\\
&\left(\frac 32 \int_{a(E)}^x \sqrt{V(t)-E}\,dt\right)^{2/3}\qtext{for}x> a(E),
\end{aligned}\right.\quad
\ee
where $a(E)$ is the unique root of $V(x)=E$ near $x=0$. In particular, $a(E) = \frac{E}{V'(0)} + \mathcal{O}(E^2)$ depends analytically on $E$ near $E=0$. Then, $\xi$ can be extended to an analytic function of $(x,E)$ in a small complex neighborhood of $(x,E)=(0,0)$. We can check that this function satisfies the following identity:
\be
\label{xi}
\xi(x)\xi'(x)^2=V(x)-E.
\ee

There exists $t=t(h)$  such that 
\ben
t=e^{-i\pi/4}+\ord(h)\quad \text{as }h\to 0^+, 
\een
and that the pair of solutions $(u^-,u^+)$ defined by
\be\label{eq:tj-scalar-pos}
(u^-,u^+):=(u^\out,u^\inc)
\left[\frac 12
\begin{pmatrix}
t&\bar{t}\\\bar{t}&t
\end{pmatrix}
\right]
\ee
satisfies $\W(u^-,u^+)=h^{-1}$ and 
\begin{align}
\label{uai}
&u^-=\sqrt{\pi}h^{-1/6}\xi'(x)^{-1/2}\Ai(h^{-2/3}\xi(x))(1+\ord(h)),\\
&u^+=\sqrt{\pi}h^{-1/6}\xi'(x)^{-1/2}\Bi(h^{-2/3}\xi(x))(1+\ord(h)),
\end{align}
as $h\to 0^+$ uniformly in a neighborhood of $x=0$.

Away from the turning point, the above solutions behave like WKB solutions :
\be
\begin{aligned}
&u^-=\frac{1}{2\sqrt[4]{V(x)-E}} \exp\left(-\frac 1h \int_{a(E)}^x\sqrt{V(t)-E}\,dt\right) 
\left(1+\ord\left(\frac{h^{2/3}}{x-a(E)}\right)\right),\\
&u^+=\frac{1}{\sqrt[4]{V(x)-E}} \exp\left(\frac 1h \int_{a(E)}^x\sqrt{V(t)-E}\,dt\right)\left(1+\ord\left(\frac{h^{2/3}}{x-a(E)}\right)\right),
\end{aligned}
\ee
for $x-a(E)\gg h^{2/3}$, and 
\be
\begin{aligned}
& u^-  =\frac{1}{\sqrt[4]{E-V(x)}} \exp\left(-\frac ih \int_{a(E)}^x\sqrt{E-V(t)}\,dt\right)
\left(1+\ord\left(\frac{h^{2/3}}{a(E)-x}\right)\right),\\
&  u^+  =\frac{1}{\sqrt[4]{E-V(x)}} \exp\left(\frac ih \int_{a(E)}^x\sqrt{E-V(t)}\,dt\right)
\left(1+\ord\left(\frac{h^{2/3}}{a(E)-x}\right)\right),
%=(E-V_j)^{-1/4} e^{i\int_{a(E)}^x\sqrt{E-V_j(t)}\,dt/h} (1+\ord(h^{2/3}/(a(E)-x))),
\end{aligned}
\ee
for $x-a(E)\ll -h^{2/3}$. 

%===============	Figure  Connection	==================

\vspace{0.3cm}

\noindent
\textbf{Funding} 

\noindent 
The second author was supported by the JSPS KAKENHI Grant Number JP21K03303. The third author was supported by the Grant-in-Aid for JSPS Fellows Grant Number JP22J00430.

\vspace{0.3cm}

\noindent
\textbf{Disclosure statement} 

\noindent 
The authors report there are no competing interests to declare.

\end{document}